\newcommand{\dom}{\mathrm{dom}}
\newcommand{\extdom}{\mathrm{extdom}}
\theoremstyle{definition}
\title{On the Size of Lempel-Ziv and Lyndon Factorizations}
\author[1]{Juha K\"{a}rkk\"{a}inen}
\author[1]{Dominik Kempa}
\author[2]{Yuto Nakashima}
\author[1]{Simon~J.~Puglisi}	
\author[3]{Arseny M. Shur}
\affil[1]{Helsinki Institute for Information Technology (HIIT) and\\
Department of Computer Science, University of Helsinki, Finland\\
  \texttt{\{juha.karkkainen,dominik.kempa,simon.puglisi\}@cs.helsinki.fi}}
\affil[2]{Department of Informatics,
Kyushu University, Japan and\\ Japan Society for the Promotion of
Science, Japan\\
  \texttt{yuto.nakashima@inf.kyushu-u.ac.jp}}
\affil[3]{Department of Algebra and Discrete Mathematics,\\
Ural Federal University, Russia\\
  \texttt{arseny.shur@urfu.ru}}
\authorrunning{J. K\"{a}rkk\"{a}inen, D. Kempa, Y. Nakashima, S. J. Puglisi, and A. M. Shur}
\subjclass{
F.2.2 Nonnumerical Algorithms and Problems (Pattern matching);
G.2.1 Combinatorics (Combinatorial algorithms);
}
\keywords{Lempel-Ziv factorization, Lempel-Ziv parsing, LZ, Lyndon word, Lyndon factorization, Standard factorization}
\begin{document}

\maketitle

\begin{abstract}
  Lyndon factorization and Lempel-Ziv (LZ) factorization are both
  important tools for analysing the structure and complexity of
  strings, but their combinatorial structure is very different. In
  this paper, we establish the first direct connection between the two
  by showing that while the Lyndon factorization can be bigger than
  the non-overlapping LZ factorization (which we demonstrate by
  describing a new, non-trivial family of strings) it is never more
  than twice the size.
\end{abstract}

\section{Introduction}

Given a string (or word) $x$, a {\em factorization} of $x$ partitions
$x$ into substrings $f_1,f_2, \ldots f_t$, such that $x = f_1f_2\ldots
f_t$. In the past 50 years or so, dozens of string factorizations have
been studied, some purely out of combinatorial interest
(e.g.~\cite{FGKK14,BBGIIIPS16,MIBTM16,BannaiGIKKPPS15}) and others because the
internal structure that they reveal allows the design of efficient
string processing algorithms. Perhaps the two most important
factorizations in string processing are the Lempel-Ziv (LZ)
factorization~\cite{ZL77} and the Lyndon factorization\footnote{Also
  known as the Standard factorization.}~\cite{CFL58}.

The LZ factorization has its origins in data compression, and is still
used in popular file compressors\footnote{For example {\tt gzip}, {\tt
    p7zip}, {\tt lz4}, and {\tt snappy} all have the LZ factorization
  at their core.}  and as part of larger software systems (see,
e.g.,~\cite{CDGHWBCFG08,HPZ11} and references therein).  More recently
it has been used in the design of compressed data structures for
indexed pattern matching~\cite{GGKNP14} and other
problems~\cite{BGGKO15}.  
Each factor $f_i$ in the LZ factorization
must be as long as possible and must be either the first occurrence of a
letter in $x$ or occur in $f_1\ldots f_{i-1}$.\footnote{This is the
  non-overlapping version of the LZ factorization.}  The Lyndon
factorization, on the other hand, was first studied in the context of
combinatorics on words~\cite[Sect.~5]{Lot97}, and later found
use in algorithms; for example, in a bijective variant of the Burrows--Wheeler transform~\cite{GS12,K09},
in suffix sorting~\cite{MRRS14} and in repetition
detection~\cite{BIINTT15}. Each factor $f_i$ in the Lyndon
factorization must be a Lyndon word: a string that is
lexicographically smaller than all its proper suffixes; and the
factors must be lexicographically
non-increasing.
Lyndon words themselves
have deep combinatorial properties~\cite{Lot97} and have wide
application~\cite{BLPR09,C04,DP07,HMM12,K09,LR95,Lot97,M13}.

For some problems each factorization (Lempel-Ziv or Lyndon) leads to
quite different solutions. Perhaps the best known example of this is
the computation of all the maximal repetitions --- also known as the
``runs'' --- in a string. In 1999 Kolpakov and Kucherov proved that
$\rho(n)$, the number of runs in a string of length $n$, is $O(n)$,
and showed how to exploit the structure of the LZ factorization to
compute all the runs in linear time~\cite{KK99}.  Much more recently,
Bannai et al.~\cite{BIINTT14} used properties of the Lyndon
factorization to obtain a much simpler constructive proof that
$\rho(n) < n$. This later result also leads to a straight-forward
linear-time algorithm for computing the runs from the Lyndon
factorization~\cite{BIINTT15}.

Our overarching motivation in this paper is to obtain a deeper
understanding of how these two fundamental factorizations ---
Lempel-Ziv and Lyndon --- relate. Toward this aim, we ask: 
\emph{by how much can the sizes of the factorizations of the same word differ}? Here the size of the Lempel-Ziv factorization $s=p_1\cdots p_z$ is $z$ and the size of the Lyndon factorization $s=f_1^{e_1} \cdots f_m^{e_m}$, where each $e_i$ is positive and each $f_i$ is lexicographically strictly greater than $f_{i+1}$, is $m$. For most strings, the number of Lyndon
factors is much smaller. Indeed, any string has a rotation with a
Lyndon factorization of size one. So the actual question is how big can $m$ be with respect to $z$. For a lower bound, we show that there are strings with $m = z + \Theta(\sqrt{z})$. Our main result is the upper bound: the inequality $m < 2z$ holds for all strings. 
This result improves significantly a previous, indirect bound by I et
al.~\cite{INIB16}, who showed that the number of Lyndon factors cannot
be more than the size of the smallest straight line program
(SLP). Since the smallest SLP is at most a logarithmic factor bigger
than the LZ factorization~\cite{Ryt03,CLLPPSS05}, this establishes an
indirect, logarithmic factor bound, which we improve to a constant
factor two.

\section{Basic Notions}
\label{sec-notation}

\newcommand{\block}{\mathit{B}}
\newcommand{\lf}[1]{\mathit{LF}({#1})}
\newcommand{\lz}[1]{\mathit{LZ}({#1})}

We consider finite strings over an alphabet
$\Sigma=\{a_1,\ldots,a_n\}$, which is linearly ordered: $a_1\prec
a_2\prec\cdots\prec a_n$.  For strings, we use the array notation:
$s=s[1..|s|]$, where $|s|$ stands for the length of $s$.  The empty
string $\varepsilon$ has length 0. Any pair $i,j$ such that $1\le i\le
j\le |s|$ specifies a \emph{substring} $s[i..j]$ in $s$. A string $u$
equal to some $s[i..j]$ is a \emph{factor} of $s$ (a \emph{prefix}, if
$i=1$, and a \emph{suffix}, if
$j=|s|$). A prefix or suffix of $s$ is called \emph{proper}
if it is not equal to $s$. A factor $u$ may be equal to several substrings of
$s$, referred to as \emph{occurrences} of $u$ in $s$. By $u^k$
we denote the concatenation of $k$ copies of string $u$. If $k=0$ we define
$u^k=\varepsilon$.

A string $u$ over $\Sigma$ is lexicographically smaller or equal than a string
$v$ (denoted by $u\preceq v$) if either $u$ is a prefix of $v$ or
$u=xaw_1$, $v=xbw_2$ for some strings $x,w_1,w_2$ and some letters
$a\prec b$.  In the latter case, we refer to this occurrence of $a$
(resp., of $b$) as the \emph{mismatch} of $u$ with $v$ (resp., of $v$
with $u$). A string $w$ is called a \emph{Lyndon word} if $w$ is
lexicographically smaller than all its non-empty proper suffixes.  The
\emph{Lyndon factorization} of a string $s$ is its unique (see
\cite{CFL58}) factorization $s=f_1^{e_1} \cdots f_m^{e_m}$ such that
each $f_i$ is a Lyndon word, $e_{i} \geq 1$, and $f_{i} \succ f_{i+1}$
for all $1 \leq i < m$.
We call each $f_i$ a \emph{Lyndon factor} of $s$, and each $F_i= f_i^{e_i}$ a \emph{Lyndon run} of $s$. The size of the Lyndon factorization is $m$, the number of distinct Lyndon factors, or equivalently, the number of Lyndon runs.

\clearpage
The \emph{non-overlapping LZ factorization} (see \cite{ZL77}) of a
string $s$ is its factorization $s=p_1 \cdots p_z$ built left to right in a greedy way by the following rule: each new factor (also called an \emph{LZ77 phrase})
$p_i$ is either the leftmost occurrence of a letter in $s$ or the longest prefix of $p_i\cdots p_z$ which occurs in $p_1\cdots p_{i-1}$.

\section{Proof of the Upper Bound}
\label{sec:upper-bound-general}

The aim of this section is to prove the following theorem.

\begin{theorem}
\label{thm:upper-bound}
Every string $s$ having Lyndon factorization $s=f_1^{e_1} \cdots f_m^{e_m}$ and non-overlapping LZ factorization $s=p_1 \cdots p_z$ satisfies $m < 2z$.
\end{theorem}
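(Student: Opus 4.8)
The plan is to charge the Lyndon runs to the LZ phrases at a rate of at most two per phrase, with a little surplus. Write $s=p_1\cdots p_z$ and recall that the first LZ phrase is necessarily a single letter (the empty string is the only prefix of $p_1\cdots p_z$ occurring in $p_1\cdots p_0$), so $p_1$ occupies exactly position $1$. Let $b_1<b_2<\cdots<b_m$ be the starting positions of the Lyndon runs $F_1,\dots,F_m$, with $b_1=1$, and send each run $F_i$ to the phrase $p_{j(i)}$ that contains position $b_i$. Then $F_1$ is the only run sent to $p_1$, so the theorem reduces to the following \emph{Key Lemma}: the interior of every LZ phrase contains the starting position of at most one Lyndon run. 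Granting it, each phrase $p_j$ with $j\ge 2$ receives at most two runs (at most one whose start coincides with the left end $c_j$ of $p_j$, and at most one starting strictly inside $p_j$), while $p_1$ receives exactly one; hence $m\le 1+2(z-1)=2z-1<2z$.

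To attack the Key Lemma I would argue by contradiction: assume two consecutive run starts $b_i<b_{i+1}$ both lie in the interior of a phrase $p_j=s[c_j\,..\,c_{j+1}-1]$, so the whole run $F_i=f_i^{e_i}=s[b_i\,..\,b_{i+1}-1]$ sits strictly inside $p_j$. Then $j\ge 2$, and $p_j$ has an earlier non-overlapping source $s[c_j'\,..\,c_{j+1}'-1]$ with $c_{j+1}'-1<c_j$; letting $d=c_j-c_j'>0$, a verbatim copy of $F_i$ occurs at $s[b_i-d\,..\,b_{i+1}-1-d]$, entirely to the left of $c_j$, hence entirely to the left of $b_i$. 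Now I would play two facts against each other. First, greedy maximality of the LZ factorization: $p_j$ cannot be extended, which, comparing with the shifted copy, forces the two occurrences of $p_j$ to be followed by different letters; consequently the suffixes $s[b_i\,..\,n]$ and $s[b_i-d\,..\,n]$, which share the long common prefix $f_i^{e_i}$ followed by the copied tail of $p_j$, eventually disagree, so one is strictly smaller than the other. Second, the order structure of the Lyndon factorization: the suffixes at run starts form a strictly decreasing chain $s[b_1\,..\,n]\succ\cdots\succ s[b_m\,..\,n]$, and, since every $f_i$ is an unbordered Lyndon word, $s[b_i\,..\,n]$ is strictly smaller than every suffix starting strictly between $b_i$ and $b_{i+1}$ that is not aligned with a period of $f_i$. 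Applying these order relations at the positions $b_i-d$ and $b_{i+1}-d$, which lie in Lyndon runs of index $<i$ because they precede $b_i$, should produce a suffix that is simultaneously forced to be too small (by the copying plus LZ-maximality) and too large (by its location inside an earlier Lyndon run) — the contradiction.

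Some routine bookkeeping remains to handle the degenerate cases where $F_i$ has exponent $e_i\ge 2$ (so $F_i$ carries a square) and where the source of $p_j$ abuts $p_j$ or ends the string, but these do not affect the core comparison. The Key Lemma is the whole obstacle: the bound is close to tight — the companion construction gives $m=z+\Theta(\sqrt z)$ — so the argument cannot afford to be lossy, and the conflict between the ``longest earlier occurrence'' rule defining LZ phrases and the suffix-order characterization of Lyndon run starts has to be exploited sharply; everything else (that $p_1$ is a single letter, well-definedness of the charging, the arithmetic) is immediate. An equivalent packaging reveals the phrases left to right and shows, via Duval's online Lyndon-factorization algorithm, that appending one LZ phrase — by definition a repeat of an already-seen factor — raises the number of Lyndon runs by at most two; this is the same combinatorial core in a different dress.
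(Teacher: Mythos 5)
Your reduction to the ``Key Lemma'' --- that the interior of every LZ phrase contains the starting position of at most one Lyndon run --- is where the argument breaks: that lemma is false, and a counterexample is already contained in this paper's lower-bound construction. Take $s_2 = b\cdot ab\cdot a^2bab\,a^2b\cdot a = babaababaaba$. Its Lyndon factorization is $b\cdot ab\cdot aabab\cdot aab\cdot a$, so the runs start at positions $1,2,4,9,12$; its non-overlapping LZ factorization is $b\cdot a\cdot ba\cdot aba\cdot baaba$, so the last phrase occupies positions $8..12$. The run starts $9$ (run $aab$) and $12$ (run $a$) both lie strictly inside that phrase, i.e.\ the whole run $aab$ plus the start of the next run sit in the interior of one phrase. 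This is not an end-of-string artifact: in $s_3$ the same phrase $s_3[8..12]=baaba$ is followed by further text and still strictly contains the run starts $9$ and $12$ (and the later phrase $s_3[20..29]$ again contains two interior run starts). Consequently the charging scheme cannot give $m\le 2z-1$, and the contradiction you sketch in the second paragraph cannot be completed: the shifted copy of $F_i$ produced by the LZ source (here $aab$ at positions $4..6$, sitting inside the earlier run $aabab$) is perfectly compatible with the decreasing chain of suffixes at run starts, because a Lyndon run of low index may well contain copies of later, lexicographically smaller runs --- indeed that phenomenon is exactly what the paper spends its machinery (domains) analyzing. The closing remark that revealing one LZ phrase increases the number of Lyndon runs by at most two is likewise only asserted, not proved, and cannot be taken as a known consequence of Duval's algorithm; with the Key Lemma gone it is the entire theorem in disguise.

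For comparison, the paper argues in the opposite direction: rather than bounding how many run starts a phrase can swallow, it exhibits many pairwise non-overlapping substrings of $s$ each of which is forced to contain an LZ phrase boundary. These substrings are attached to \emph{domains} (leftmost occurrences of $F_i\cdots F_{i+d-1}$, shown by Lemma~\ref{lm:prev-occ} to be prefixes of earlier Lyndon factors), to \emph{tandem domains} and to \emph{$p$-groups} of domains; the key quantitative step is Lemma~\ref{lm:final-lemma}, proved by induction on the size $k$ of a domain via the canonical-subdomain decomposition, showing that each extended domain contains at least $\lceil k/2\rceil+1$ distinct boundaries. Partitioning $s$ into extended $1$-domains then yields $z\ge\lceil (m+t)/2\rceil>m/2$. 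If you want to salvage a charging-style proof you would need, at minimum, a replacement for the Key Lemma that tolerates phrases covering several short runs near the right end of their span while still recovering the factor two globally --- which is essentially what the paper's domain induction accomplishes.
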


Let us fix an arbitrary string $s$ and relate all notation ($f_i, e_i, F_i, p_i, m, z$) to $s$. The main line of the proof is as follows. We identify occurrences of some factors in $s$ that must contain a boundary between two LZ77 phrases. Non-overlapping occurrences  contain different boundaries, so our aim is to prove the existence of more than $m/2$ such occurrences. We start with two basic facts; the first one is obvious.

\begin{lemma}
\label{lm:between}
For any strings $u,v,w_1,w_2$, the relation $uw_1\prec v \prec uw_2$ implies that $u$ is a prefix of $v$. 
\end{lemma}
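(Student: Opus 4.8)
The plan is a short proof by contradiction; the lemma really is routine, so what follows is essentially the case analysis I would write down.

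Suppose $u$ is \emph{not} a prefix of $v$. Looking at the longest common prefix of $u$ and $v$, whose length is strictly less than $|u|$ in this situation, one of two things must happen: either (i) $v$ is a proper prefix of $u$, or (ii) $u$ and $v$ agree on their first $k-1$ letters and disagree at position $k$ for some $k$ with $1\le k\le\min(|u|,|v|)$ and $k\le|u|$. I would argue that each case violates one of the two hypotheses $uw_1\prec v$ and $v\prec uw_2$.

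The only facts I would invoke are: a string is a prefix of, hence $\preceq$, any of its extensions, so $u\preceq uw_1$ and $u\preceq uw_2$; and, since $w_1$ and $w_2$ are appended after position $|u|$, the strings $u$, $uw_1$, $uw_2$ carry the same letter in every position $\le|u|$. In case (i) we get $v\prec u\preceq uw_1$, contradicting $uw_1\prec v$. In case (ii) put $a=v[k]$ and $b=u[k]$ with $a\ne b$; by the second fact $b$ is also the $k$-th letter of $uw_1$ and of $uw_2$, and $v$ still agrees with each of $uw_1,uw_2$ on the first $k-1$ letters, so the mismatch between $v$ and $uw_j$ is at position $k$. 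If $a\prec b$ then $v\prec uw_1$, contradicting $uw_1\prec v$; if $b\prec a$ then $uw_2\prec v$, contradicting $v\prec uw_2$. Either way we reach a contradiction, so $u$ is a prefix of $v$.

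There is no genuine obstacle here. The only point that needs a little care is ensuring the mismatch position $k$ lies in $[1,|u|]$ (so that appending $w_1$ or $w_2$ does not disturb it); this is precisely why the sub-case where $v$ is a proper prefix of $u$ must be separated out and dispatched directly via $v\prec u$, rather than through a mismatch argument.
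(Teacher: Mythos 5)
Your proof is correct. The paper offers no argument at all for this lemma (it is introduced with ``the first one is obvious''), and your case analysis --- separating the sub-case where $v$ is a proper prefix of $u$ from the genuine-mismatch sub-case at a position $k\le|u|$, and noting that appending $w_1$ or $w_2$ cannot move that mismatch --- is exactly the routine argument the authors are implicitly relying on.
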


\begin{lemma}
\label{lm:f_vs_F}
The inequality $j<i$ implies $f_j\succ F_i$.
\end{lemma}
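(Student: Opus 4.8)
The plan is to argue by contradiction, so suppose $f_j\not\succ F_i$. Since $j<i$, the chain $f_1\succ\cdots\succ f_m$ gives $f_i\prec f_j$; in particular $f_j\neq F_i$ (which would force $f_j=f_i$), so the assumption really says $f_j\prec F_i=f_i^{e_i}$. If $e_i=1$ this contradicts $f_i\prec f_j$ outright, so I may assume $e_i\ge2$; then $f_i\prec f_j\prec f_i\cdot f_i^{\,e_i-1}$, and Lemma~\ref{lm:between} (with $u=f_i$) shows that $f_i$ is a prefix of $f_j$.

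Next I would use that the Lyndon word $f_j$ is primitive: write $f_j=f_i^{\,r}g$ with $r\ge1$ maximal. Then $g\neq\varepsilon$ — otherwise $f_j=f_i^{\,r}$ would be a power of a shorter word, forcing $f_j=f_i$ — and $f_i$ is not a prefix of $g$ by maximality of $r$. As a non-empty proper suffix of the Lyndon word $f_j$, the word $g$ satisfies $f_j\prec g$; this also forbids $g$ from being a prefix of $f_i$ (such a prefix would be a proper prefix of $f_j$, hence precede $f_j$). Consequently $g$ and $f_i$ first disagree at some position $k\le\min(|g|,|f_i|)$; since $f_i$ is a prefix of $f_j$ and $k\le|f_i|$, we have $f_j[k]=f_i[k]$, so $f_j\prec g$ forces $f_i[k]\prec g[k]$, i.e.\ $f_i\prec g$.

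It remains to compare $f_j=f_i^{\,r}g$ with $F_i=f_i^{\,e_i}$, distinguishing the cases $r\ge e_i$ and $r<e_i$. If $r\ge e_i$, then $F_i$ is a proper prefix of $f_j$ (a length count using $|g|\ge1$), so $f_j\succ F_i$, contradicting the assumption. If $r<e_i$, then cancelling the common prefix $f_i^{\,r}$ reduces the comparison to $g$ versus $f_i^{\,e_i-r}$; the latter begins with $f_i$, and $g$ disagrees with $f_i$ at position $k\le|f_i|$ with $g[k]\succ f_i[k]$, whence $g\succ f_i^{\,e_i-r}$ and therefore $f_j=f_i^{\,r}g\succ f_i^{\,r}f_i^{\,e_i-r}=F_i$ — again a contradiction. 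Thus $f_j\succ F_i$.

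I expect the only delicate point to be the repeated ``push a lexicographic comparison through a shared prefix to a shorter one'' move — used once to get $f_i\prec g$ and once to read off $g\succ f_i^{\,e_i-r}$: each time one must check that the decisive mismatch position $k$ lies simultaneously inside $f_i$, inside $g$, and inside the first copy of $f_i$ within $f_i^{\,e_i-r}$, which follows from $k\le\min(|g|,|f_i|)$ together with $e_i-r\ge1$. Everything else is routine, and the argument uses no property of $f_i$ beyond its being a word, nor of $f_j$ beyond its being a Lyndon word with $f_j\succ f_i$.
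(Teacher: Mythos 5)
Your proof is correct, but it is organized differently from the paper's. The paper proves the stronger claim that $f_j \succ f_i^k$ for every $k\geq 1$ by induction on $k$: in the inductive step, either $f_i^{k-1}$ mismatches $f_j$ (and then $f_j\succ f_i^k$ trivially) or $f_j=f_i^{k-1}x$ with $x$ a nonempty proper suffix of $f_j$, and the Lyndon property rules out $x\preceq f_i$, so $x\succ f_i$ and the comparison pushes through one more power of $f_i$. You instead argue by contradiction, locating the maximal $r$ with $f_i^r$ a prefix of $f_j$, writing $f_j=f_i^r g$, and using the Lyndon property plus primitivity to pin down the mismatch position $k$ and conclude $f_i\prec g$. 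The core fact driving both arguments is the same — the tail that remains after stripping copies of $f_i$ from $f_j$ must exceed $f_i$ — but the phrasing differs in one noticeable way: the paper's induction gets $x\neq\varepsilon$ for free from the hypothesis $f_j\succ f_i^{k-1}$ with no mismatch, whereas your direct argument has to invoke the primitivity of Lyndon words separately to rule out $g=\varepsilon$. Your approach is a legitimate ``one-shot'' alternative to the paper's incremental induction, at the cost of a little more bookkeeping around the mismatch position; the paper's version is shorter to write because the induction absorbs the case analysis.
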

\begin{proof}
We prove that $f_j\succ f_i^k$ for any $k$, arguing by induction on $k$. The base case $k=1$ follows from the definitions. Let $f_j\succ f_i^{k-1}$. In the case of mismatch, $f_j\succ f_i^k$ holds trivially.  Otherwise, $f_j=f_i^{k-1}x$ for some $x\ne\varepsilon$. If $x=f_i$ or  $x\prec f_i$, then $x\prec f_j$, and so $f_j$ is not a Lyndon word. Hence $x\succ f_i$ and thus $f_j=f_i^{k-1}x\succ f_i^k$. Thus, the inductive step holds.
\end{proof}

The next lemma locates the leftmost occurrences of the Lyndon runs and their products.

\begin{lemma}
\label{lm:prev-occ}
Let $d \geq 1$ and $1 \leq i \leq m-d+1$, and assume that
$F_i F_{i+1} \cdots F_{i+d-1}$
has an occurrence to the left of the trivial one in $s$.
Then:
\begin{enumerate}
\item The leftmost occurrence of $F_i F_{i+1} \cdots F_{i+d-1}$ is a prefix of $f_j$ for some $j<i$;
\item $F_i F_{i+1}\cdots F_{i+d-1}$
is a prefix of every $f_k$ with $j<k<i$.
\end{enumerate}
\end{lemma}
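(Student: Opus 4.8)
The plan is to phrase everything in terms of the \emph{run-boundary suffixes} $S_k := f_k^{e_k}f_{k+1}^{e_{k+1}}\cdots f_m^{e_m}$ for $1\le k\le m$, i.e.\ the suffix of $s$ beginning at the trivial occurrence of $F_k\cdots F_m$. I will use three facts. (i) Any non-empty proper suffix of a Lyndon word is strictly larger than the word; in particular a Lyndon word of length at least $2$ is unbordered, since a border would be a non-empty proper prefix coinciding with a proper suffix, whereas a proper prefix of a Lyndon word is strictly smaller than it. (ii) The longest Lyndon prefix of any string $x$ is the first factor of its Lyndon factorization, and the Lyndon prefixes of $x$ are nested (each a prefix of the next). (iii) $S_1\succ S_2\succ\cdots\succ S_m$ -- the monotonicity of the run-boundary suffixes, which I would either cite or re-derive from Lemma~\ref{lm:f_vs_F} by induction on $m$. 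Now set $W:=F_iF_{i+1}\cdots F_{i+d-1}$, let $O$ be its leftmost occurrence, starting at position $q$, and note that by hypothesis $q$ lies inside $F_1\cdots F_{i-1}$, say inside the run $F_j$ with $j\le i-1<i$. Since $W$ occurs at $q$, it is a prefix of $R:=s[q..|s|]$; it is also a prefix of $S_i$ (the trivial occurrence).

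First I would show that $q$ starts one of the copies of $f_j$ inside $F_j$. Assume not; then $q$ sits strictly inside a copy of $f_j$, so $R$ begins with a non-empty proper suffix $g$ of $f_j$. By fact~(i) we have $g\succ f_j$ and $g$ is not a prefix of $f_j$, so $g$ and $f_j$ first differ at some position within $|g|$, where $g$ has the larger character; since $S_j$ begins with $f_j$ and $|S_j|\ge|f_j|>|g|$, this yields $R\succ S_j$. On the other hand $j<i$ gives $S_j\succ S_i$ by fact~(iii). Thus $S_i\prec S_j\prec R$, and as $W$ is a prefix of both $S_i$ and $R$, Lemma~\ref{lm:between} (with $u=W$) shows $W$ is a prefix of $S_j$. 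But then $W$ occurs at the start of $F_j$, a position strictly to the left of $q$, contradicting the choice of $O$. Hence $q$ starts a copy of $f_j$, so $R=f_j^{\,k}F_{j+1}\cdots F_m$ for some $k\ge1$; in particular $f_j$ is a prefix of $R$.

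Now $W$ and $f_j$ are both prefixes of $R$. If $|W|>|f_j|$, then $f_j$ is a proper prefix of $W$; being a Lyndon word it is thus one of the Lyndon prefixes of $W$, hence by fact~(ii) a prefix of the longest one, $f_i$, so $f_j\preceq f_i$ -- contradicting $f_j\succ f_i$, which holds as $j<i$. Therefore $|W|\le|f_j|$ and $W$ is a prefix of $f_j$, which is part~1. For part~2, fix $k$ with $j<k<i$. By part~1, $W$ is a prefix of $f_j$, hence of $S_j$; it is also a prefix of $S_i$; and by fact~(iii) $S_j\succ S_k\succ S_i$. Lemma~\ref{lm:between} with $u=W$ and $v=S_k$ shows $W$ is a prefix of $S_k=f_k^{e_k}\cdots f_m^{e_m}$. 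If $|W|>|f_k|$ then, exactly as above, $f_k$ would be a Lyndon prefix of $W$, hence a prefix of $f_i$, giving $f_k\preceq f_i$ and contradicting $f_k\succ f_i$. So $|W|\le|f_k|$ and $W$ is a prefix of $f_k$, establishing part~2.

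The step I expect to be the real obstacle is fact~(iii), the strict chain $S_1\succ\cdots\succ S_m$. This is where the ``periodic'' configurations live -- for instance $f_i^{e_i}$ being a prefix of the strictly larger, longer Lyndon word $f_j$ for some $j<i$ -- and a head-on comparison of two such suffixes keeps unfolding (though only finitely) as one repeatedly factors out copies of the shorter factor. The clean route is to isolate~(iii) as a separate structural lemma and prove it once, by induction on $m$ (peeling off $F_1$ and comparing the resulting suffixes via Lemma~\ref{lm:f_vs_F}); granting~(iii), the argument above is routine.
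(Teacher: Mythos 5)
Your argument is correct modulo your background facts (i)--(iii), and it takes a genuinely different route from the paper. The paper argues locally: for part~1 it takes the leftmost Lyndon run $F_j$ overlapped by the occurrence and kills the bad placements by two case analyses driven directly by the Lyndon property of $f_j$ (a proper suffix of $f_j$ matching a prefix of some $f_{i'}$, or a mismatch inside an occurrence contained in $f_j$), and part~2 is a separate induction on $d$ comparing the factors $f_j,f_k$ themselves via Lemma~\ref{lm:f_vs_F} and Lemma~\ref{lm:between}. You instead compare whole suffixes of $s$ at run boundaries: writing $S_k=F_k\cdots F_m$, the chain $S_1\succ\cdots\succ S_m$ together with Lemma~\ref{lm:between} pins the leftmost occurrence of $W=F_i\cdots F_{i+d-1}$ to the start of a copy of $f_j$ and shows $W$ is a prefix of $S_k$ for every $j<k<i$, and the classical fact that the first Lyndon factor of a word is its longest Lyndon prefix (applied to $W$, whose factorization starts with $f_i$) converts ``prefix of $S_k$'' into ``prefix of $f_k$'' with no induction on $d$. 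This buys a uniform treatment of parts~1 and~2 and, as a by-product, the start-of-run information the paper later needs for domains; the price is two ingredients the paper never uses: the longest-Lyndon-prefix fact (standard, citable from Lothaire) and your fact~(iii).

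Fact~(iii) is true and is indeed the crux, but note that your one-line recipe (``induction on $m$, peel off $F_1$, compare via Lemma~\ref{lm:f_vs_F}'') does not quite close as stated: after peeling, the inductive hypothesis gives $S_2\succ S_3$, yet in the case where $F_2$ is a proper prefix of $f_1$ you must compare the leftover $x$ (where $f_1=F_2x$) against $F_3\cdots F_m$, which that hypothesis does not settle. A clean repair is to prove the stronger statement $f_k\succ F_{k+1}\cdots F_m$ by downward induction on $k$: either $f_k\succ F_{k+1}$ holds by a mismatch (Lemma~\ref{lm:f_vs_F}) and transfers, or $f_k=F_{k+1}x$ with $x\succ f_k\succ f_{k+1}\succ F_{k+2}\cdots F_m$ by the Lyndon property, the definition of the factorization, the inductive hypothesis and transitivity. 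From $f_k\succ S_{k+1}$ one gets $F_k\succ S_{k+1}$, and then $S_k=F_kS_{k+1}\succ S_{k+1}$ because $w\succ T$ implies $wT\succ T$ for arbitrary strings. With that lemma proved (or cited), your proof is complete.
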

\begin{proof}
(1) Let $j$ be the smallest integer such that the leftmost occurrence
of $F_i F_{i+1} \cdots F_{i+d-1}$ in $s$ overlaps $F_j$.
Suppose first that the leftmost occurrence of $F_i F_{i+1} \cdots F_{i+d-1}$
is not entirely contained inside
a single occurrence of $f_j$.
Then there exists a non-empty suffix $u$ of $f_j$ that is equal to some prefix of one of
the factors $f_i, \ldots, f_{i+d-1}$, say $f_{i'}$.
We cannot have $u=f_j$ because then $f_j \preceq f_{i'}$ which is impossible since $j<i'$.
Thus $u$ must be a proper suffix of $f_j$. But then $u \preceq f_{i'} \prec f_j$, which
contradicts $f_j$ being a Lyndon word.

Suppose then that the leftmost occurrence of $F_i F_{i+1}\cdots F_{i+d-1}$
in $s$ is entirely contained inside $f_j$ but is not its prefix, i.e.,
$f_j=vF_i F_{i+1} \cdots F_{i+d-1}w$ for some strings $v \neq \varepsilon$
and $w$. Since $f_j$ is a Lyndon word we have
$f_j \prec F_i F_{i+1} \cdots F_{i+d-1}w$.
Consider the position of the
mismatch of $F_i F_{i+1} \cdots F_{i+d-1}w$ with $f_j$.
If the mismatch occurs inside $F_i F_{i+1} \cdots F_{i+d-1}$, we can write
$f_j=F_i \cdots F_{i'-1} f_{i'}^{e}x$ where $i \leq i' < i+d$,
$0 \leq e < e_i$, and $x$ is a suffix of $f_j$ that satisfies $x\prec f_{i'}\prec f_j$,
which contradicts $f_j$ being a Lyndon word.
On the other hand, the mismatch inside $w$ implies that $f_j$ begins with
$F_i F_{i+1} \cdots F_{i+d-1}$, contradicting the assumption that the
inspected occurrence of $F_i F_{i+1} \cdots F_{i+d-1}$ is the leftmost in $s$.

(2) We prove this part by induction on $d$.
Let $d=1$. By Lemma~\ref{lm:f_vs_F} we have $f_j\succ f_k\succ F_i$. Since $f_j$ begins with $F_i$ by statement 1, so does $f_k$ by Lemma~\ref{lm:between}. 
Assume now that the claim holds for all $d'<d$.
From the inductive assumption $F_i$ and $F_{i+1} \cdots F_{i+d-1}$
  are both prefixes of $f_k$.
Let $y, y'$, and $z$ be such that
  $f_{j}=F_{i} F_{i+1} \cdots F_{i+d-1}y$,
  $f_{k}=F_{i+1} \cdots F_{i+d-1}y'=F_{i}z$.
We have $j<k$ and thus $f_{k} \prec f_{j}$ must hold
  which, since $F_{i}$ is a prefix of both $f_{j}$
  and $f_{k}$, implies $z \prec F_{i+1} \cdots F_{i+d-1}y$.
On the other hand, since $f_{k}$ is a Lyndon word, we have
  $f_{k}=F_{i+1} \cdots F_{i+d-1}y' \prec z$.
By Lemma~\ref{lm:between}, $F_{i+1} \cdots F_{i+d-1}y' \prec z \prec F_{i+1} \cdots F_{i+d-1}y$
  implies that $F_{i+1} \cdots F_{i+d-1}$ is a prefix of $z$ or
  equivalently that $F_{i} F_{i+1} \cdots F_{i+d-1}$ is a prefix of $f_{k}$.
\end{proof}

\subsection{Domains}

Lemma~\ref{lm:prev-occ} motivates the following definition.

\begin{definition}
Let $d \geq 1$ and $1 \leq i \leq m-d+1$. We define the
\emph{$d$-domain} of Lyndon run $F_i$ as
the substring $\dom_d(F_i)=F_j F_{j+1} \cdots F_{i-1}$, $j \leq i$
of $s$, where $F_j$ is the Lyndon run (which exists by
Lemma~\ref{lm:prev-occ}) starting at the same position as
the leftmost occurrence of $F_i F_{i+1} \cdots F_{i+d-1}$ in $s$.
Note that if $F_{i} F_{i+1} \cdots F_{i+d-1}$ does not have any
occurrence to the left of the trivial one then $\dom_d(F_i)=\varepsilon$.
The integers $d$ and $i-j$ are called the \emph{order} and \emph{size}
of the domain, respectively.

The \emph{extended $d$-domain} of $F_i$ is the
substring $\extdom_d(F_i)=\dom_d(F_i) F_i \cdots F_{i+d-1}$ of $s$.
\end{definition}

Lemma~\ref{lm:prev-occ} implies two easy properties of domains presented below as Lemma~\ref{prop}.
These properties lead to a convenient graphical notation to
illustrate domains (see Fig.~\ref{fig:example-of-domain}).

\begin{lemma}
\label{prop}
Let $\dom_d(F_i)=F_{j} \cdots F_{i-1}$, $j \leq i$. Then:
\begin{itemize}
\item For any $d'>d$, $\dom_{d'}(F_i)$ is a suffix of $\dom_{d}(F_i)$;
\item For any $d'\geq 1$, $\dom_{d'}(F_k)$ is a substring of $\dom_{d}(F_i)$ if $j \leq k < i$.
\end{itemize}
\end{lemma}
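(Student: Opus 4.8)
The plan is to deduce both items from Lemma~\ref{lm:prev-occ} via a single observation. Writing $\dom_d(F_i)=F_j\cdots F_{i-1}$, parts~(1) and~(2) of Lemma~\ref{lm:prev-occ} together assert that $f_\ell$ begins with $F_iF_{i+1}\cdots F_{i+d-1}$ for \emph{every} $\ell$ with $j\le\ell\le i-1$: part~(1) gives it for $\ell=j$ (the leftmost occurrence of $F_i\cdots F_{i+d-1}$ being a prefix of $f_j$ is exactly this statement), and part~(2) gives it for $j<\ell<i$. I would record this as fact~$(\star)$ up front, since each bullet then reduces to a short prefix-chasing argument.

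For the first bullet, if $\dom_{d'}(F_i)=\varepsilon$ the claim is trivial, so I would assume $\dom_{d'}(F_i)=F_{j'}\cdots F_{i-1}$ with $j'<i$, where $F_{j'}$ is anchored at the leftmost occurrence of $F_i\cdots F_{i+d'-1}$. As $d'>d$, the string $F_i\cdots F_{i+d-1}$ is a prefix of $F_i\cdots F_{i+d'-1}$, so that leftmost occurrence contains an occurrence of $F_i\cdots F_{i+d-1}$ beginning at the start of $F_{j'}$. Since the leftmost occurrence of $F_i\cdots F_{i+d-1}$ begins at $F_j$, this forces $j\le j'$, whence $\dom_{d'}(F_i)=F_{j'}\cdots F_{i-1}$ is a suffix of $\dom_d(F_i)=F_j\cdots F_{i-1}$.

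For the second bullet, fix $k$ with $j\le k<i$. If $\dom_{d'}(F_k)=\varepsilon$ we are done; otherwise $\dom_{d'}(F_k)=F_{k'}\cdots F_{k-1}$ with $k'<k$, and the leftmost occurrence of $F_k\cdots F_{k+d'-1}$ is a prefix of $f_{k'}$. The crux is to show $k'\ge j$. Assuming $k'<j$, I would chain: $f_{k'}$ begins with $F_k\cdots F_{k+d'-1}$, which begins with $F_k$, which begins with $f_k$, which by $(\star)$ (valid since $j\le k\le i-1$) begins with $F_i\cdots F_{i+d-1}$; hence $f_{k'}$ begins with $F_i\cdots F_{i+d-1}$, giving an occurrence of $F_i\cdots F_{i+d-1}$ at the start of $F_{k'}$, strictly to the left of $F_j$ --- contradicting that $F_j$ anchors the leftmost occurrence of $F_i\cdots F_{i+d-1}$. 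Thus $k'\ge j$, and together with $k'\le k-1\le i-1$ the runs $F_{k'},\ldots,F_{k-1}$ all lie inside the run block $F_j\cdots F_{i-1}$, so $\dom_{d'}(F_k)$ is a substring of $\dom_d(F_i)$. No separate treatment of $k=j$ is needed: there $k'\ge j$ clashes with $k'<k=j$, which simply means $\dom_{d'}(F_j)=\varepsilon$.

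I do not anticipate a real obstacle; the argument is entirely prefix/occurrence bookkeeping. The one point to be careful about is the length accounting in the chain of ``begins with'' relations --- each string in the chain is at least as long as the next, ending with $F_i\cdots F_{i+d-1}$, which $(\star)$ guarantees is a prefix of $f_k$ --- and the small but essential observation that $(\star)$ must be invoked at the endpoint index $\ell=j$, not merely in the open range $j<\ell<i$, since this is precisely what makes the $k'<j$ contradiction (and the $k=j$ case) go through uniformly.
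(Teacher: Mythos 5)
Your argument is correct, and it follows exactly the route the paper intends: the paper states Lemma~\ref{prop} without proof as an immediate consequence of Lemma~\ref{lm:prev-occ}, and your fact $(\star)$ (including the endpoint case $\ell=j$ from part~(1)) together with the leftmost-occurrence bookkeeping is precisely that derivation, carried out correctly for both bullets.
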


\begin{figure}[t]
\centering
\begin{tikzpicture}[scale=0.27,font=\scriptsize]
  \draw (0,-0.2)  rectangle (3.5,1);
  \draw (1.75,1.2) node[below]{$F_j$};
  \draw (5.25,0.5) node{$\cdots$};
  \draw (7,-0.2) rectangle (10.5,1);
  \draw (8.75,1.2) node[below]{$F_{i-1}$};
  \draw (10.5,-0.2) rectangle (14,1);
  \draw (12.25,1.2) node[below]{$F_i$};
  \draw (15.75,0.5) node{$\cdots$};
  \draw (17.5,-0.2) rectangle (21,1);
  \draw (19.25,1.2) node[below]{$F_{i+d-1}$};
  \draw[-, rounded corners=0.3cm] (10.5, 1) -- (10.5,3) -- (0,3) -- (0, 1);
  \draw (5.25,3) node[above]{$d$};
  \draw (0,-1.4) rectangle (21,-0.2);
  \draw (10.5,-1.7) node[above]{$\extdom_d(F_i)$};
  \draw (0,-2.6) rectangle (10.5,-1.4);
  \draw (5.25,-2.9) node[above]{$\dom_d(F_i)$};
\end{tikzpicture}
\hspace{0.1cm}
\begin{tikzpicture}[scale=0.27,font=\scriptsize]
  \foreach \x/\ch in {-14/a, -13/b, -12/b, -11/a, -10/b, -9/b, -8/a, -7/b, -6/a, -5/b, -4/b, -3/a,
  -2/b, -1/a, 0/b, 1/b, 2/b, 3/a, 4/b, 5/a, 6/b, 7/b, 8/a, 9/b, 10/a}
    \draw (\x+0.5, -0.38) node[above]{$\rm \ch$};
  \draw (-14,-0.3) rectangle (-8, 1);
  \draw (-8,-0.3) rectangle (3, 1);
  \draw (3,-0.3) rectangle (8, 1);
  \draw (8,-0.3) rectangle (10, 1);
  \draw (10,-0.3) rectangle (11, 1);
  \draw[-, rounded corners=0.3cm] (10, 1) -- (10,7.5) -- (-14,7.5) -- (-14, 1);
  \draw (-2,7.3) node[above]{$1$};
  \draw[-, rounded corners=0.3cm] (8, 1) -- (8,6) -- (-14,6) -- (-14, 1);
  \draw (-3,5.8) node[above]{$1$};
  \draw[-, rounded corners=0.3cm] (8, 1) -- (8, 4.5) -- (-8,4.5) -- (-8,1);
  \draw (0,4.3) node[above]{$2$};
  \draw[-, rounded corners=0.3cm] (3,1) -- (3, 3) -- (-8,3) -- (-8, 1);
  \draw (-2.5,2.8) node[above]{$1, 2, 3$};
  \draw (-15.2, -0.53) node[above]{$s[i]$:};
  \draw (-15.2, -1.53) node[above]{$i$:};
  \foreach \x/\ch in {-14/1, -13/2, -12/3, -11/4, -10/5, -9/6, -8/7, -7/8, -6/9, -5/10, -4/11, -3/12,
  -2/13, -1/14, 0/15, 1/16, 2/17, 3/18, 4/19, 5/20, 6/21, 7/22, 8/23, 9/24, 10/25}
    \draw (\x+0.5, -1.53) node[above]{\tiny $ \rm \ch$};
\end{tikzpicture}
\caption{Left: Graphical notation used to illustrate
  $\dom_d(F_i)=F_{j}\cdots F_{i-1}$. Also shown is
  $\extdom_d(F_i)=F_j \cdots F_{i+d-1}$.
  Right: all non-empty domains for the example string
  with the Lyndon factorization of size 5. Note that due to
  Lemma~\ref{prop} there are no non-trivial intersections
  between domains.}
\label{fig:example-of-domain}
\end{figure}
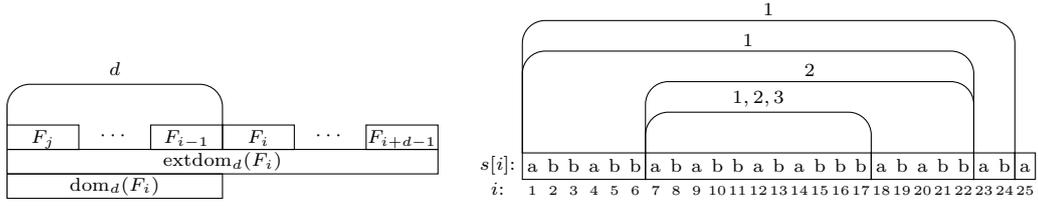

\begin{definition}
\label{def:domain-associated-substring}
Consider $\dom_{d}(F_i)$ for some
$d \geq 1$, $1 \leq i \leq m-d+1$,
and let $\alpha=F_i \cdots F_{i+d-1}$. We say that the leftmost
occurrence of $\alpha$ in $s$ is \emph{associated} with $\dom_d(F_i)$.
\end{definition}

For example, in Fig.~\ref{fig:example-of-domain}, $s[7..9]$ is associated with
$\dom_2(s[23..24])$; $s[7..17]$ is associated with $\dom_1(s[7..17])$ even
though it is not shown, since $\dom_1(s[7..17])=\varepsilon$. Observe that due
to Lemma~\ref{prop} this implies that $\dom_d(s[7..17])=\varepsilon$
for any $d>1$, and hence for example the substring of $s$ associated with
$\dom_2(s[7..17])$ is $s[7..22]$.

A substring $s[i..i{+}k]$, $k \geq 0$ is said to \emph{contain an LZ77 phrase
boundary} if some phrase of the LZ-factorization of $s$ begins in one of the
positions $i,\ldots,i{+}k$. Clearly, non-overlapping substrings contain different
phrase boundaries. Furthermore, if the substring of $s$ does not have any
occurrence to the left (in particular, if it is the leftmost occurrence of
a single symbol), it contains an LZ77 phrase boundary, thus we obtain the
following easy observation.

\begin{lemma}
\label{lm:phrase-in-domain-substr}
Each substring associated with a domain contains an LZ77 phrase
boundary.
\end{lemma}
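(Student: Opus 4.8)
The plan is to derive the lemma directly from the observation, stated just before it, that any substring of $s$ having no occurrence entirely to its left must contain an LZ77 phrase boundary. First I would unfold the definitions: by Definition~\ref{def:domain-associated-substring}, the substring associated with $\dom_d(F_i)$ is the leftmost occurrence in $s$ of $\alpha=F_i\cdots F_{i+d-1}$, and $\alpha\ne\varepsilon$ because each $F_k=f_k^{e_k}$ with $e_k\ge 1$ and $f_k\ne\varepsilon$. Since this occurrence is the leftmost one, no occurrence of $\alpha$ can end before its starting position (such an occurrence would start even earlier), so it has no occurrence strictly to its left, and the observation applies. That is the whole argument; what remains is to make sure the observation itself is justified.

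To keep the proof self-contained I would also spell out why a nonempty substring $w=s[i..j]$ with no occurrence strictly to its left contains a phrase boundary. If $i=1$ this is immediate, since the first LZ77 phrase begins at position $1$. Otherwise, assume for contradiction that none of the positions $i,\ldots,j$ begins a phrase; then $s[i..j]$ lies inside a single phrase $p_k=s[\ell..r]$ with $\ell\le i$ and $j\le r$, and in fact $\ell<i$ because position $i$ does not begin a phrase. In particular $|p_k|\ge 2$, so $p_k$ is not a fresh letter and, by the definition of the non-overlapping LZ factorization, $p_k$ has an occurrence inside $p_1\cdots p_{k-1}=s[1..\ell-1]$, say $s[\ell'..r']$ with $r'\le\ell-1$. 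Aligning this copy with $s[\ell..r]$ carries $s[i..j]$ to an occurrence of $w$ ending at a position $\le r'\le\ell-1<i$, contradicting the assumption that $w$ has no occurrence strictly to its left.

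I do not anticipate a genuine obstacle here. The only points needing a little care are checking that the enclosing phrase $p_k$ is long enough to be a copied phrase rather than a single fresh symbol (handled by $\ell<i$) and treating the corner case $i=1$ separately; everything else is a direct application of the preceding observation together with the definitions of domains and of the non-overlapping LZ factorization.
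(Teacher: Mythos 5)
Your proposal is correct and follows essentially the same route as the paper: the paper states, immediately before the lemma, the observation that a substring with no occurrence to its left must contain a phrase boundary, and the lemma is presented as an immediate consequence via Definition~\ref{def:domain-associated-substring} (the associated substring is by definition a leftmost occurrence). You simply spell out the two short steps the paper leaves implicit — that a leftmost occurrence has no occurrence ending strictly before it, and the standard non-overlapping-LZ argument (enclosing phrase $p_k$ with $\ell<i$ forces an earlier copy of $s[i..j]$) — and both are correct.
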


\subsection{Tandem Domains}

\begin{definition}
Let $d \geq 1$ and $1 \leq i \leq m-d$. A pair of domains
$\dom_{d+1}(F_i)$, $\dom_{d}(F_{i+1})$ is called a \emph{tandem domain}
if $\dom_{d+1}(F_i)\cdot F_i = \dom_{d}(F_{i+1})$ or, equivalently,
if $\extdom_{d+1}(F_i)=\extdom_{d}(F_{i+1})$. Note that we permit
$\dom_{d+1}(F_i)=\varepsilon$.
\end{definition}

For example, $\dom_3(s[18..22])$, $\dom_2(s[23..24])$ is a
tandem domain in Fig.~\ref{fig:example-of-domain}, because we
have  $\extdom_3(s[18..22])=\extdom_2(s[23..24])=s[7..25]$.

\begin{definition}
\label{def:tandem-associated-substring}
Let $\dom_{d+1}(F_{i})$, $\dom_{d}(F_{i+1})$ be a tandem domain.
Since $F_{i+1} \cdots F_{i+d}$ is a prefix of $F_{i}$ by
Lemma~\ref{lm:prev-occ}, we let $F_i=F_{i+1}\cdots F_{i+d}x$.
The leftmost occurrence of $F_{i} \cdots F_{i+d}$ in $s$
can thus be written as $F_{i+1} \cdots F_{i+d} x F_{i+1} \cdots F_{i+d}$. 
We say that this particular occurrence of the factor $xF_{i+1} \cdots F_{i+d}$
is \emph{associated} with the tandem domain $\dom_{d+1}(F_{i})$,
$\dom_{d}(F_{i+1})$. 
\end{definition}

\begin{remark}
Note that the above definition permits $\dom_{d+1}(F_i)=\varepsilon$.
If $\dom_{d+1}(F_i)\neq \varepsilon$,
then $\alpha$, the substring of $s$ associated with $\dom_{d+1}(F_i)$,
$\dom_{d}(F_{i+1})$, is (by Lemma~\ref{lm:prev-occ})
a substring of $F_j$, where $F_j$, $j<i$ is the
leftmost Lyndon run inside $\dom_{d+1}(F_i)$.
Otherwise, $\alpha$ overlaps at least two Lyndon runs.
In both cases, however, $\alpha$ is a substring of $\extdom_{d+1}(F_i)$.
\end{remark}

\begin{lemma}
Each substring associated with a tandem domain contains an LZ77 phrase
boundary.
\end{lemma}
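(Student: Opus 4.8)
The plan is to argue by contradiction, reusing the ``push the enclosing phrase leftward'' idea behind Lemma~\ref{lm:phrase-in-domain-substr}. Write $\gamma=F_{i+1}\cdots F_{i+d}$ and $\alpha=F_i\cdots F_{i+d}$, and recall from Definition~\ref{def:tandem-associated-substring} that $F_i=\gamma x$, so $\alpha=\gamma x\gamma$ and the associated substring $\beta$ is the occurrence of $x\gamma$ that forms the suffix of the leftmost occurrence of $\alpha$ in $s$. The crucial preliminary step is to read off a \emph{common} leftmost position: the tandem identity $\dom_{d+1}(F_i)\cdot F_i=\dom_d(F_{i+1})$ says that both domains begin at the same Lyndon run $F_j$ (or, when $\dom_{d+1}(F_i)=\varepsilon$, at $F_i$ itself), so by the very definition of a domain the leftmost occurrence of $\alpha$ and the leftmost occurrence of $\gamma$ start at one and the same position $q$ of $s$. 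Hence $\beta=s[q+|\gamma|..q+2|\gamma|+|x|-1]$, and $s[q..q+|\gamma|-1]=\gamma$ sits immediately to its left.

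Next I would set up the enclosing phrase. Suppose $\beta=s[a..b]$ contains no phrase boundary. Then no phrase starts at $a$ (note $a=q+|\gamma|\ge 2$), so the phrase $p_k$ covering position $a$ starts at some $a'\le a-1=q+|\gamma|-1$; and since no phrase starts anywhere in $[a,b]$, that phrase must reach past $b$, i.e.\ $p_k=s[a'..e]$ with $e\ge b$. In particular $|p_k|\ge 2$, so $p_k$ is not a freshly introduced single letter and is therefore the longest earlier‑occurring prefix of $p_k\cdots p_z$; in particular $p_k$ occurs entirely inside $s[1..a'-1]$, say as $s[c..c+|p_k|-1]$ with $c+|p_k|-1\le a'-1$.

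The contradiction then splits on the position of $a'$. If $a'<q$, then $p_k$ contains the whole leftmost occurrence $s[q..b]$ of $\alpha$, and transferring $p_k$ to position $c$ produces an occurrence of $\alpha$ starting strictly before $q$ — contradicting the choice of $q$. If instead $q\le a'\le q+|\gamma|-1$, put $\eta=s[a'..q+|\gamma|-1]$, a non-empty suffix of $\gamma$; then $s[a'..b]=\eta x\gamma$ is a prefix of $p_k$, so the copy of $p_k$ at $c$ contains an occurrence of $\gamma$ whose last position is $c+|\eta x\gamma|-1\le a'-1\le q+|\gamma|-2$, hence whose first position is $<q$ — contradicting that $q$ is the leftmost occurrence of $\gamma$. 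Both cases being impossible, $\beta$ must contain a phrase boundary.

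I expect the only non-routine point to be the first step — recognising that the tandem identity anchors the leftmost occurrences of $\gamma$ and of $\alpha=\gamma x\gamma$ at the same position $q$. Once that is in hand everything else is bookkeeping, and the single thing to watch is the length inequality $c+|\eta x\gamma|-1\le q+|\gamma|-2$ in the second case: it is exactly what forces the displaced copy of $\gamma$ to land to the left of $q$, rather than merely to the left of $a'$.
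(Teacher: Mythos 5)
Your proof is correct and follows essentially the same route as the paper: assume the associated substring $x F_{i+1}\cdots F_{i+d}$ lies strictly inside a phrase, note that the tandem identity anchors the leftmost occurrences of $F_{i+1}\cdots F_{i+d}$ and $F_i\cdots F_{i+d}$ at the same position, and derive a contradiction because the non-overlapping earlier copy of that phrase would yield an occurrence of $F_{i+1}\cdots F_{i+d}$ (or of $F_i\cdots F_{i+d}$) strictly to the left of its leftmost one. Your case split on the phrase's starting position just spells out the paper's terser "rules out the possibility" step.
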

\begin{proof}
Let $\dom_{d+1}(F_i)$, $\dom_{d}(F_{i+1})$ be a tandem domain and let
$u=x F_{i+1} \cdots F_{i+d}$ be the associated substring of $s$.
Suppose to the contrary that $u$ contains no LZ77 phrase boundaries. Then
some LZ-phrase $p_t$ contains $u$ and the letter preceding $u$.
Since we consider a non-overlapping LZ77 variant, the previous occurrence
of $p_t$ in $s$ must be a substring of $p_1\cdots p_{t-1}$. 
Note, however, that
$u$ is preceded in $s$ by the leftmost occurrence of $F_{i+1}\cdots F_{i+d}$,
which is the prefix of $F_j$ (see Definition~\ref{def:tandem-associated-substring}).
Thus, the leftmost occurrence of $u$ in $s$ either immediately precedes the
associated substring, or overlaps it, or coincides with it.
This, however, rules out the possibility that the previous occurrence of $p_t$
occurs in $p_1 \cdots p_{t-1}$, a contradiction.
\end{proof}

We say that a tandem domain $\dom_{d+1}(F_i)$, $\dom_{d}(F_{i+1})$
is \emph{disjoint} from a tandem domain $\dom_{e+1}(F_k)$, $\dom_{e}(F_{k+1})$
if all $i, i+1, k, k+1$ are different, i.e., $i+1<k$ or $k+1<i$.

\begin{lemma}
\label{lm:disjoint-tandems}
Substrings associated with disjoint tandem domains do not overlap each other.
\end{lemma}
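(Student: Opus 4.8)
The plan is to locate, for every tandem domain, the precise occurrence in $s$ of its associated substring, and then to establish disjointness by comparing positions. Fix a tandem domain $(\dom_{d+1}(F_i),\dom_{d}(F_{i+1}))$ and recall from Definition~\ref{def:tandem-associated-substring} and the remark following it that its associated substring $\alpha=xF_{i+1}\cdots F_{i+d}$ is the length-$|F_i|$ suffix of the leftmost occurrence $\beta$ of $F_i\cdots F_{i+d}$. Two shapes occur. If $\dom_{d+1}(F_i)=F_j\cdots F_{i-1}$ is non-empty, then by Lemma~\ref{lm:prev-occ}(1) $\beta$ is a prefix of the Lyndon word $f_j$; hence $\alpha$ lies entirely inside the single run $F_j$ (indeed inside its first copy $f_j$), occupying exactly the positions immediately following the leftmost occurrence of $F_{i+1}\cdots F_{i+d}$, as was already observed in the proof of the previous lemma. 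If instead $\dom_{d+1}(F_i)=\varepsilon$, then the tandem condition forces $\dom_{d}(F_{i+1})=F_i$, so $\beta$ is the trivial occurrence of $F_i\cdots F_{i+d}$ and $\alpha$ is contained in the block of runs $F_i\cdots F_{i+d}$ and ends exactly at the right boundary of $F_{i+d}$.

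Now take disjoint tandem domains $(\dom_{d+1}(F_i),\dom_{d}(F_{i+1}))$ and $(\dom_{e+1}(F_k),\dom_{e}(F_{k+1}))$, assume without loss of generality $i+1<k$ (hence $i<k$), and write $\alpha_1,\alpha_2$ for their associated substrings. I split according to which of $\dom_{d+1}(F_i)$, $\dom_{e+1}(F_k)$ is empty. If $\dom_{d+1}(F_i)$ is non-empty and $\dom_{e+1}(F_k)=\varepsilon$, then $\alpha_1\subseteq F_j$ for some $j<i<k$ while $\alpha_2$ is contained in runs $F_k\cdots F_{k+e}$, so $\alpha_1$ ends strictly before $\alpha_2$ begins. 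If both domains are non-empty, with anchor runs $F_j$ and $F_{j'}$, then $\alpha_1\subseteq F_j$, $\alpha_2\subseteq F_{j'}$; when $j\ne j'$ these are distinct runs of $s$ and we are done, and when $j=j'$ the substrings sit inside the common Lyndon word $f_j$, where I compare explicit intervals: both $F_i\cdots F_{i+d}$ and $F_k\cdots F_{k+e}$ are prefixes of $f_j$, so one is a prefix of the other; $|F_i\cdots F_{i+d}|>|F_k\cdots F_{k+e}|$ (otherwise, by Lemma~\ref{lm:prev-occ}(2), $F_i\cdots F_{i+d}$ would be a prefix of $f_i$, impossible since $|F_i\cdots F_{i+d}|>|F_i|\ge|f_i|$); and the key length inequality $|F_k\cdots F_{k+e}|\le|F_{i+1}\cdots F_{i+d}|$ then places $\alpha_2$'s interval strictly to the left of $\alpha_1$'s. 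Finally, if $\dom_{d+1}(F_i)=\varepsilon$, then $\alpha_1$ lies in the runs $F_i\cdots F_{i+d}$, and I show its run-range cannot meet that of $\alpha_2$: if the anchor $F_{j'}$ of $\dom_{e+1}(F_k)$ has $j'<i$ or $j'>i+d$ then $\alpha_2\subseteq F_{j'}$ is disjoint from $F_i\cdots F_{i+d}$; a run $F_{j'}$ with $i<j'\le i+d$ cannot be this anchor, since then $F_k\cdots F_{k+e}$ (a prefix of $f_{j'}$, with $f_{j'}$ sitting inside the prefix $F_{i+1}\cdots F_{i+d}$ of $f_i$) would occur inside $F_i$, strictly to the left of $F_{j'}$, contradicting leftmost-ness; the case $\dom_{e+1}(F_k)=\varepsilon$ reduces to showing $i+d<k$ (if $k\le i+d$ then $F_k$ reappears inside $F_i$, whence — using that $F_{k+1}\cdots F_{k+e}$ is a prefix of $f_k$ — so does $F_k\cdots F_{k+e}$, contradicting $\dom_{e+1}(F_k)=\varepsilon$); and the remaining possibility $j'=i$ reduces once more to the inequality $|F_k\cdots F_{k+e}|\le|F_{i+1}\cdots F_{i+d}|$.

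The main obstacle is thus concentrated in these ``near-collision'' configurations, and specifically in the two inequalities $|F_k\cdots F_{k+e}|\le|F_{i+1}\cdots F_{i+d}|$ and $i+d<k$. I expect to obtain both from a single Lyndon-word mechanism: since $F_{i+1}\cdots F_{i+d}$ is a proper prefix of the Lyndon word $f_i$, every prefix comparison forced by a run boundary lying inside $F_{i+1}\cdots F_{i+d}$ must be consistent with $f_i$ being lexicographically smaller than each of its proper suffixes; combined with $f_i\succ f_{i+1}\succ\cdots\succ f_k$ (Lemma~\ref{lm:f_vs_F}) and with Lemma~\ref{lm:prev-occ}, this limits how far the indices $i+d$ and $k$ can stray from $i$ and forbids a second occurrence of $F_k\cdots F_{k+e}$ hidden inside $F_i$. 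Carrying out this bookkeeping precisely, and checking that the case split exhausts all relative positions of $\alpha_1$ and $\alpha_2$, is the bulk of the work; everything else is a routine use of Lemmas~\ref{lm:between}, \ref{lm:f_vs_F}, \ref{lm:prev-occ}, and \ref{prop}.
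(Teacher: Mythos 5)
Your overall architecture is sound and close to the paper's: you locate each associated substring as the length-$|F_i|$ suffix of the leftmost occurrence of $F_i\cdots F_{i+d}$, and your case split (anchors distinct, anchors equal, one or both domains empty) matches the paper's four cases, with several branches handled correctly. However, the decisive step in the two hardest configurations --- both domains non-empty with a common anchor $F_j$, and $\dom_{d+1}(F_i)=\varepsilon$ with the $e$-tandem anchored at $F_i$ --- is exactly the inequality $|F_k\cdots F_{k+e}|\le|F_{i+1}\cdots F_{i+d}|$, and you do not prove it. What you do prove, $|F_i\cdots F_{i+d}|>|F_k\cdots F_{k+e}|$, is too weak: it only bounds $|F_k\cdots F_{k+e}|$ by $|F_i|+|F_{i+1}\cdots F_{i+d}|$, which does not keep $\alpha_2$ (sitting in the prefix of $f_j$ of length $|F_k\cdots F_{k+e}|$) from reaching past position $|F_{i+1}\cdots F_{i+d}|$, where $\alpha_1$ begins. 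Deferring the needed bound to an unspecified ``Lyndon-word mechanism'' whose bookkeeping is ``the bulk of the work'' leaves the core of the lemma unestablished. The fix is a one-liner that uses the disjointness hypothesis $i+1<k$ in an essential way: apply Lemma~\ref{lm:prev-occ}(2) to the $e$-tandem at the intermediate index $i+1$ (legitimate because the anchor index is $\le i<i+1<k$), which gives that $F_k\cdots F_{k+e}$ is a prefix of $f_{i+1}$, hence $|F_k\cdots F_{k+e}|\le|f_{i+1}|\le|F_{i+1}|\le|F_{i+1}\cdots F_{i+d}|$. You already invoke part (2) at index $i$; the missing idea is simply to invoke it at $i+1$, which is precisely what the paper does.

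A secondary slip: in the case $\dom_{d+1}(F_i)=\dom_{e+1}(F_k)=\varepsilon$ you claim that $k\le i+d$ would make $F_k\cdots F_{k+e}$ reoccur inside $F_i$, contradicting $\dom_{e+1}(F_k)=\varepsilon$. From ``$F_k$ occurs inside $f_i$'' and ``$F_{k+1}\cdots F_{k+e}$ is a prefix of $f_k$'' this only follows when $k+e\le i+d$; if $k+e>i+d$, the text following the copy of $F_k$ inside $f_i$ is $F_{k+1}\cdots F_{i+d}x\cdots$, and nothing guarantees it continues with the rest of $F_{k+1}\cdots F_{k+e}$. The paper's contradiction is weaker and always available: since $F_{k+1}\cdots F_{k+e}$ is a prefix of $F_k$, it already occurs inside $F_i$, strictly to the left of the run $F_k$, contradicting $\dom_{e}(F_{k+1})=F_k$ (i.e.\ the leftmost occurrence of $F_{k+1}\cdots F_{k+e}$ starting at $F_k$). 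With that repair and the prev-occ(2)-at-$i+1$ argument above, your case analysis closes.
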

\begin{proof}
Let $\dom_{d+1}(F_i)$, $\dom_{d}(F_{i+1})$ and $\dom_{e+1}(F_k)$,
$\dom_{e}(F_{k+1})$ be tandem domains
called
the $d$-tandem and $e$-tandem, respectively.
Without the loss of generality let
$i+1<k$.

Case 1: $\dom_{d+1}(F_i) \neq \varepsilon$ and $\dom_{e+1}(F_k)\neq\varepsilon$.
First observe that if
the $d$-tandem and $e$-tandem begin with different Lyndon runs, then
the associated substrings trivially do not overlap
by the above Remark.
Assume then
that all considered domains start with $F_j$, $j<i$.
By Definition~\ref{def:tandem-associated-substring} we can write $F_j$
as $F_j=F_{i+1}\cdots F_{i+d}x F_{i+1} \cdots F_{i+d}y$, where
$|F_{i+1}\cdots F_{i+d}x|=|F_i|$ and $x F_{i+1}\cdots F_{i+d}$ is
the substring of $s$ associated with the $d$-tandem.
Similarly we have $F_j=F_{k+1}\cdots F_{k+e}x' F_{k+1}\cdots F_{k+e}y'$
where $|F_{k+1}\cdots F_{k+e}x'|=|F_k|$ and $x' F_{k+1}\cdots F_{k+e}$
is the substring of $s$ associated with the $e$-tandem.
However, by Lemma~\ref{lm:prev-occ},
$F_k \cdots F_{k+e}$ is a prefix of $F_{i+1}$ and thus
$|F_{k+1}\cdots F_{k+e}x' F_{k+1}\cdots F_{k+e}| \leq |F_{i+1}|$,
i.e., the substring of $s$ associated with the $e$-tandem
is inside the prefix $F_{i+1}$
of $F_j$ and thus is on the left of the substring associated with
the $d$-tandem.

Case 2: $\dom_{d+1}(F_i)=F_{j}\cdots F_{i-1}$, $j<i$, and
$\dom_{e+1}(F_k)=\varepsilon$. In this case the substring associated with the
$e$-tandem begins in $F_k$ by the above Remark and thus is on the right of the
substring associated with the $d$-tandem.

Case 3: $\dom_{d+1}(F_i)=\varepsilon$ and $\dom_{e+1}(F_k)=\varepsilon$.
This is only possible if $i+d<k$ since otherwise $F_k$ (and thus also
$F_{k+1} \cdots F_{k+e}$) occurs in $F_i$, contradicting
$\dom_{e}(F_{k+1})=F_k$.
Then, $\extdom_{d+1}(F_i)$ does not overlap $\extdom_{e+1}(F_k)$, and
the claim holds by above Remark.

Case 4: $\dom_{d+1}(F_i)=\varepsilon$ and
$\dom_{e+1}(F_k)=F_{j}\cdots F_{k-1}$, $j<k$.
Then, the substring of $s$ associated with $e$-tandem is a substring of $F_j$.
If $i>j$, then clearly $\extdom_{d+1}(F_i)$ does not overlap $F_j$.
On the other hand, if $i<j$, it must also hold $i+d<j$ since
otherwise $F_j$ (and thus also $F_k \cdots F_{k+e}$) occurs in $F_i$,
contradicting $\dom_{e+1}(F_k)=F_j \cdots F_{k-1}$, and thus again,
$\extdom_{d+1}(F_i)$ does not overlap $F_j$. In both cases the Remark above
implies the claim.
Finally, if $i=j$, we must also have $i+1<k$ from the assumption about the
disjointness of $d$- and $e$-tandem.
By Lemma~\ref{lm:prev-occ} we can write
$F_{i}=F_{i+1}\cdots F_{i+d}x$, $F_{i+1}=F_{k}\cdots F_{k+e}x'$ and hence
also $F_{i} \cdots F_{i+d}=F_{k}\cdots F_{k+e}x' F_{i+2}\cdots F_{i+d}x F_{i+1} \cdots F_{i+d}$.
In this decomposition, the substring associated with the $e$-tandem occurs inside
the prefix $F_{k}\cdots F_{k+e}$, and the substring associated with the $d$-tandem
is the suffix $x F_{i+1} \cdots F_{i+d}$, which proves the claim.
\end{proof}

\subsection{Groups}

We now generalize the concept of tandem domain.

\begin{definition}
Let $d \geq 1$, $2 \leq p \leq m$, and $1 \leq i \leq m-d-p+2$. A group
of $p$ domains $\dom_{d+p-1}(F_i)$, $\dom_{d+p-2}(F_{i+1})$, $\ldots$,
$\dom_{d}(F_{i+p-1})$ is called a \emph{$p$-group} if
for all $t=0,\ldots,p-2$ the equality
$\dom_{d+p-1-t}(F_{i+t})\cdot F_{i+t}=\dom_{d+p-2-t}(F_{i+t+1})$ holds or, equivalently, 
$\extdom_{d+p-1}(F_{i})=\ldots=\extdom_{d}(F_{i+p-1})$.
Note that we permit $\dom_{d+p-1}(F_i)=\varepsilon$.
\end{definition}

\begin{figure}[t]
\centering
\begin{tikzpicture}[scale=0.31,font=\footnotesize]
  \draw (0, 0) rectangle (13, 1);
  \draw (18, 0) rectangle (29, 1);
  \draw (29, 0) rectangle (34, 1);
  \draw (34, 0) rectangle (37, 1);
  \draw (37, 0) rectangle (38, 1);
  \draw (34, 1) rectangle (35, 2);
  \draw (29, 1) rectangle (32, 2);
  \draw (32, 1) rectangle (33, 2);
  \draw (29, 2) rectangle (30, 3);
  \draw (18, 1) rectangle (23, 2);
  \draw (23, 1) rectangle (26, 2);
  \draw (26, 1) rectangle (27, 2);
  \draw (18, 2) rectangle (21, 3);
  \draw (21, 2) rectangle (22, 3);
  \draw (18, 3) rectangle (19, 4);
  \draw (23, 2) rectangle (24, 3);
  \draw (0, 1) rectangle (5, 2);
  \draw (5, 1) rectangle (8, 2);
  \draw (8, 1) rectangle (9, 2);
  \draw (5, 2) rectangle (6, 3);
  \draw (0, 2) rectangle (3, 3);
  \draw (3, 2) rectangle (4, 3);
  \draw (0, 3) rectangle (1, 4);
  \draw (6.5, 0.5) node{$u$};
  \draw (15.5, 0.5) node{$\cdots$};
  \draw (23.5, 0.5) node{$v$};
  \draw (31.5, 0.5) node{$w$};
  \draw (35.5, 0.5) node{$x$};
  \draw (37.5, 0.5) node{$y$};
  \draw (2.5, 1.5) node{$w$};
  \draw (6.5, 1.5) node{$x$};
  \draw (8.5, 1.5) node{$y$};
  \draw (1.5, 2.5) node{$x$};
  \draw (3.5, 2.5) node{$y$};
  \draw (0.5, 3.5) node{$y$};
  \draw (5.5, 2.5) node{$y$};
  \draw (20.5, 1.5) node{$w$};
  \draw (19.5, 2.5) node{$x$};
  \draw (18.5, 3.5) node{$y$};
  \draw (24.5, 1.5) node{$x$};
  \draw (26.5, 1.5) node{$y$};
  \draw (21.5, 2.5) node{$y$};
  \draw (23.5, 2.5) node{$y$};
  \draw (30.5, 1.5) node{$x$};
  \draw (32.5, 1.5) node{$y$};
  \draw (29.5, 2.5) node{$y$};
  \draw (34.5, 1.5) node{$y$};
  \draw (-2, 0.5) node{$s:$};
  \draw[dotted] (-1, 0) -- (0, 0);
  \draw[dotted] (-1, 1) -- (0, 1);
  \draw[dotted] (38, 0) -- (39, 0);
  \draw[dotted] (38, 1) -- (39, 1);
  \draw[-, rounded corners=0.3cm] (0, 0) -- (0, 5) -- (29, 5) -- (29, 0);
  \draw[-, rounded corners=0.3cm] (0, 0) -- (0, 6.5) -- (34, 6.5) -- (34, 0);
  \draw[-, rounded corners=0.3cm] (0, 0) -- (0, 8) -- (37, 8) -- (37, 0);
  \draw (14.5, 5) node[above]{3};
  \draw (17, 6.5) node[above]{2};
  \draw (18.5, 8) node[above]{1};
  \draw (1, -1) rectangle (4, 0);
  \draw (4, -1) rectangle (9, 0);
  \draw (2.5, -0.5) node{$\alpha_{xy}$};
  \draw (6.5, -0.5) node{$\alpha_{wx}$};
\end{tikzpicture}

\vspace{0.3cm}

\begin{tikzpicture}[scale=0.31,font=\footnotesize]
  \draw (0, 0) rectangle (17, 1);
  \draw (17, 0) rectangle (25, 1);
  \draw (25, 0) rectangle (29, 1);
  \draw (29, 0) rectangle (31, 1);
  \draw (31, 0) rectangle (32, 1);
  \draw (0, 1) rectangle (8, 2);
  \draw (8, 1) rectangle (12, 2);
  \draw (12, 1) rectangle (14, 2);
  \draw (14, 1) rectangle (15, 2);
  \draw (0, 2) rectangle (4, 3);
  \draw (4, 2) rectangle (6, 3);
  \draw (6, 2) rectangle (7, 3);
  \draw (0, 3) rectangle (2, 4);
  \draw (2, 3) rectangle (3, 4);
  \draw (8, 2) rectangle (10, 3);
  \draw (10, 2) rectangle (11, 3);
  \draw (17, 1) rectangle (21, 2);
  \draw (21, 1) rectangle (23, 2);
  \draw (23, 1) rectangle (24, 2);
  \draw (17, 2) rectangle (19, 3);
  \draw (19, 2) rectangle (20, 3);
  \draw (25, 1) rectangle (27, 2);
  \draw (27, 1) rectangle (28, 2);
  \draw (8.5, 0.5) node{$u$};
  \draw (21, 0.5) node{$v$};
  \draw (4, 1.5) node{$v$};
  \draw (27, 0.5) node{$w$};
  \draw (19, 1.5) node{$w$};
  \draw (10, 1.5) node{$w$};
  \draw (2, 2.5) node{$w$};
  \draw (30, 0.5) node{$x$};
  \draw (26, 1.5) node{$x$};
  \draw (22, 1.5) node{$x$};
  \draw (18, 2.5) node{$x$};
  \draw (13, 1.5) node{$x$};
  \draw (9, 2.5) node{$x$};
  \draw (5, 2.5) node{$x$};
  \draw (1, 3.5) node{$x$};
  \draw (31.5, 0.5) node{$y$};
  \draw (27.5, 1.5) node{$y$};
  \draw (23.5, 1.5) node{$y$};
  \draw (19.5, 2.5) node{$y$};
  \draw (2.5, 3.5) node{$y$};
  \draw (6.5, 2.5) node{$y$};
  \draw (10.5, 2.5) node{$y$};
  \draw (14.5, 1.5) node{$y$};
  \draw (-2, 0.5) node{$s:$};
  \draw[dotted] (-1, 0) -- (0, 0);
  \draw[dotted] (-1, 1) -- (0, 1);
  \draw[dotted] (32, 0) -- (33, 0);
  \draw[dotted] (32, 1) -- (33, 1);
  \draw[-, rounded corners=0.3cm] (0, 0) -- (0, 5) -- (17, 5) -- (17, 0);
  \draw[-, rounded corners=0.3cm] (0, 0) -- (0, 6.5) -- (25, 6.5) -- (25, 0);
  \draw[-, rounded corners=0.3cm] (0, 0) -- (0, 8) -- (29, 8) -- (29, 0);
  \draw (8.5, 5) node[above]{4};
  \draw (12.5, 6.5) node[above]{3};
  \draw (14.5, 8) node[above]{2};
  \draw (3, -1) rectangle (7, 0);
  \draw (7, -1) rectangle (15, 0);
  \draw (15, -1) rectangle (32, 0);
  \draw (5, -0.5) node{$\alpha_{wx}$};
  \draw (11, -0.5) node{$\alpha_{vw}$};
  \draw (23.5, -0.5) node{$\alpha_{uv}$};
\end{tikzpicture}
  \vspace{0.2cm}
  \caption{Illustration of Lemma~\ref{lm:p-groups}. In the examples
    $u$, $v$, $w$, $x$, $y$ are Lyndon runs from the Lyndon factorization
    of $s$. The top figure shows a 3-group: $\dom_3(w)=u\cdots v$,
    $\dom_2(x)=u\cdots vw$,
    $\dom_1(y)=u\cdots vwx$. $\alpha_{wx}$ is a substring associated with
    the tandem domain $\dom_3(w)$, $\dom_2(x)$, and $\alpha_{xy}$ is a
    substring associated with the tandem domain $\dom_2(x)$, $\dom_1(y)$.
    Observe that the substrings associated with tandem domains occur as a
    contiguous substring and in
    reverse order (compared to the order of the corresponding tandem domains in $s$).
    The bottom figure shows a 4-group: $\dom_5(u)=\varepsilon$, $\dom_4(v)=u$,
    $\dom_3(w)=uv$, $\dom_2(x)=uvw$ and demonstrates the case when the
    leftmost domain in a group is empty.
  }
  \label{fig:example-of-p-group}
\end{figure}
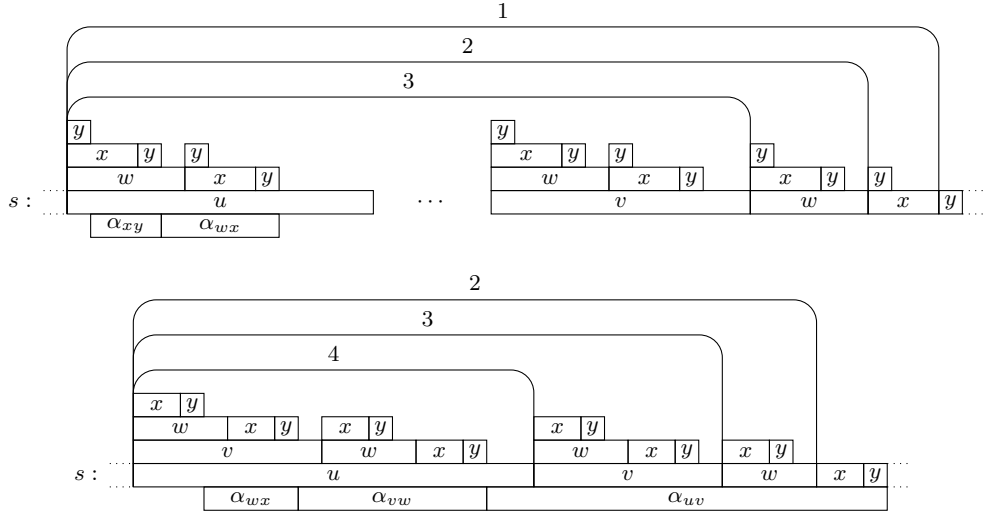

\begin{lemma}
\label{lm:p-groups}
Substrings associated with tandem domains from the same $p$-group
do not overlap each other.
\end{lemma}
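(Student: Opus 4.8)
The plan is to pin down exactly where each associated substring sits inside the common extended domain $\beta:=\extdom_{d+p-1}(F_i)=\cdots=\extdom_d(F_{i+p-1})$ of the group, and then to read off that these locations are pairwise disjoint. Write the $p$-group as $\dom_{d+p-1}(F_i),\dom_{d+p-2}(F_{i+1}),\ldots,\dom_d(F_{i+p-1})$, and for $t=0,\ldots,p-2$ let $\alpha_t$ be the substring associated with the tandem domain $\dom_{d+p-1-t}(F_{i+t}),\dom_{d+p-2-t}(F_{i+t+1})$. Reading Definition~\ref{def:tandem-associated-substring} for this tandem (with $d+p-2-t$ in place of $d$ and $i+t$ in place of $i$), the string $F_{i+t+1}\cdots F_{i+d+p-2}$ is a prefix of $F_{i+t}$, so I can write $F_{i+t}=F_{i+t+1}\cdots F_{i+d+p-2}\,x_t$; the leftmost occurrence $P_t$ of $F_{i+t}\cdots F_{i+d+p-2}$ in $s$ then equals $(F_{i+t+1}\cdots F_{i+d+p-2})\,x_t\,(F_{i+t+1}\cdots F_{i+d+p-2})$, and $\alpha_t$ is the suffix occurrence of $x_tF_{i+t+1}\cdots F_{i+d+p-2}$ inside it. A purely arithmetic first step records that, setting $L_t=|F_{i+t}F_{i+t+1}\cdots F_{i+d+p-2}|$, this factorisation of $F_{i+t}$ gives $L_t=|F_{i+t}|+L_{t+1}=2L_{t+1}+|x_t|$, hence $L_0>L_1>\cdots>L_{p-1}>0$; and moreover $|\alpha_t|=|x_t|+|F_{i+t+1}\cdots F_{i+d+p-2}|=|F_{i+t}|=L_t-L_{t+1}$.

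The main step, and the only delicate one, is to anchor $P_t$ at the left end of $\beta$ --- more precisely, to show that $P_t$ is the length-$L_t$ prefix of $\beta$. Since all extended domains in the group coincide, every $\dom_{d+p-1-t}(F_{i+t})$ begins at the same position of $s$, namely the left end of $\beta$. I would then split on whether $\dom_{d+p-1}(F_i)$ is empty. If $\dom_{d+p-1}(F_i)\neq\varepsilon$, then by Lemma~\ref{lm:prev-occ} and the definition of a domain each $P_t$ is a prefix of the first Lyndon run of $\dom_{d+p-1}(F_i)$, which is itself a prefix of $\beta$, so $P_t$ is the length-$L_t$ prefix of $\beta$. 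If $\dom_{d+p-1}(F_i)=\varepsilon$, then $\beta=F_iF_{i+1}\cdots F_{i+d+p-2}$, and: for $t=0$ the occurrence $P_0$ is the trivial one and hence equals $\beta$; for $t\geq1$ the equality $\extdom_{d+p-1-t}(F_{i+t})=\beta$ forces $\dom_{d+p-1-t}(F_{i+t})=F_iF_{i+1}\cdots F_{i+t-1}$, whose first run is $F_i$ (the first run of $\beta$), so again $P_t$ is the length-$L_t$ prefix of $\beta$. In every case $P_t$ is the length-$L_t$ prefix of $\beta$, so its length-$(L_t-L_{t+1})$ suffix $\alpha_t$ occupies exactly positions $L_{t+1}+1,\ldots,L_t$ of $\beta$.

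It then remains only to observe that the intervals $(L_{t+1},L_t]$ for $t=0,\ldots,p-2$ are consecutive, hence pairwise disjoint, so $\alpha_0,\ldots,\alpha_{p-2}$ pairwise do not overlap --- in fact they occur contiguously inside $\beta$ in the reverse order $\alpha_{p-2},\ldots,\alpha_0$, matching Figure~\ref{fig:example-of-p-group}. I expect the anchoring in the middle step to be the principal obstacle: one must verify in both the empty and the non-empty case that the leftmost occurrence of $F_{i+t}\cdots F_{i+d+p-2}$ really starts at the left end of $\beta$, which requires combining Lemma~\ref{lm:prev-occ}, the definition of a domain, and the hypothesis that all extended domains in the group agree; the length arithmetic and the final disjointness step are routine.
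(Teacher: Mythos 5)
Your proof is correct, and the underlying picture is the same as the paper's: pin each associated substring $\alpha_t$ to a fixed location inside the common extended domain $\beta$ and observe that these locations are consecutive, so the $\alpha_t$ appear contiguously in reverse order. The execution differs slightly. The paper unwinds the $p=3$ case explicitly via the nested decomposition $F_{i+2}\cdots F_{i+d+1}\,x\,F_{i+2}\cdots F_{i+d+1}\,x'\,F_{i+1}\cdots F_{i+d+1}$ and then asserts that for $p>3$ one should ``consider all subgroups of size three, in left-to-right order.'' You instead give a single uniform argument: introduce the lengths $L_t=|F_{i+t}\cdots F_{i+d+p-2}|$, prove (with a careful split on whether $\dom_{d+p-1}(F_i)$ is empty) that the leftmost occurrence $P_t$ of $F_{i+t}\cdots F_{i+d+p-2}$ is exactly the length-$L_t$ prefix of $\beta$, and deduce that $\alpha_t$ occupies positions $(L_{t+1},L_t]$ of $\beta$. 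This avoids any appeal to induction or subgroups, handles all $p\ge 2$ at once, and in fact directly yields the stronger statement of Lemma~\ref{lm:concatenation} as a byproduct. Your anchoring step is the genuinely nontrivial part and you treat it correctly, in particular noting that a degenerate $\dom_{d+p-1-t}(F_{i+t})=\varepsilon$ can only happen at $t=0$ because all extended domains in the group must coincide.
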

\begin{proof}
Assume first that $p=3$.
By Lemma~\ref{lm:prev-occ} we have
$F_i=F_{i+1}\cdots F_{i+d+1}x'$ and $F_{i+1}=F_{i+2}\cdots F_{i+d+1}x$ for
some words $x'$ and $x$. We can thus write the leftmost occurrence of
$F_{i}\cdots F_{i+d+1}$ in $s$ as
$F_{i+2}\cdots F_{i+d+1}x F_{i+2}\cdots F_{i+d+1}x' F_{i+1} \cdots F_{i+d+1}$.
It is easy to see that those occurrences of $x F_{i+2}\cdots F_{i+d+1}$ and
$x' F_{i+1} \cdots F_{i+d+1}$ are associated with (resp.) tandem domains
$\dom_{d+1}(F_{i+1})$, $\dom_{d}(F_{i+2})$ and
$\dom_{d+2}(F_i)$, $\dom_{d+1}(F_{i+1})$,
and thus the claim holds.

For $p>3$ it suffices to consider all subgroups of size three, in 
left-to-right order, to verify that the substrings associated with all
tandem domains occur in reversed order as a contiguous substring
and thus no two substrings overlap each other.
\end{proof}

The above Lemma is illustrated in Fig.~\ref{fig:example-of-p-group}.
It also motivates the following definition which generalizes the
concept of associated substring from tandem domains to $p$-groups.

\begin{definition}
Consider a $p$-group $\dom_{d+p-1}(F_i)$, $\dom_{d+p-2}(F_{i+1})$,
$\ldots$, $\dom_{d}(F_{i+p-1})$ for some $p \geq 2$. From
Lemma~\ref{lm:prev-occ}, $F_{i+p-1}\cdots F_{i+p+d-2}$
is a prefix of $F_{i}$. Thus, the leftmost occurrence of
$F_{i}\cdots F_{i+p+d-2}$ in $s$ can be written as
$F_{i+p-1}\cdots F_{i+p+d-2}x F_{i+1} \cdots F_{i+p+d-2}$.
We say that this particular occurrence of the substring
$x F_{i+1} \cdots F_{i+p+d-2}$ is \emph{associated} with
the $p$-group $\dom_{d+p-1}(F_i)$, $\dom_{d+p-2}(F_{i+1})$,
$\ldots$, $\dom_{d}(F_{i+p-1})$.
\end{definition}

It is easy to derive a formal proof of the following
Lemma from the proof of Lemma~\ref{lm:p-groups}.

\begin{lemma}
\label{lm:concatenation}
The substring associated with a $p$-group is
the concatenation, in reverse order, of the substrings
associated with the tandem domains belonging to the $p$-group.
\end{lemma}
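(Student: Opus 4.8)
The plan is to argue by induction on $p$, following the ``subgroups of size three'' reduction already used in the proof of Lemma~\ref{lm:p-groups}. Fix a $p$-group $\dom_{d+p-1}(F_i),\dom_{d+p-2}(F_{i+1}),\ldots,\dom_d(F_{i+p-1})$ and set $N=i+p+d-2$. First I would record the data common to the whole group: by Lemma~\ref{lm:prev-occ} (exactly as in Definition~\ref{def:tandem-associated-substring}), for each $t\in\{0,\ldots,p-2\}$ the word $F_{i+t+1}\cdots F_N$ is a prefix of $F_{i+t}$, so I can write $F_{i+t}=F_{i+t+1}\cdots F_N\,x_t$; the substring associated with the tandem domain $\dom_{d+p-1-t}(F_{i+t}),\dom_{d+p-2-t}(F_{i+t+1})$ is then the occurrence of $u_t:=x_tF_{i+t+1}\cdots F_N$ sitting as the suffix of the leftmost occurrence of $F_{i+t}\cdots F_N$. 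The $p$-group equality $\extdom_{d+p-1}(F_i)=\cdots=\extdom_d(F_{i+p-1})$ forces $\dom_{d+p-1-t}(F_{i+t})=F_j\cdots F_{i+t-1}$ with one and the same $j<i$ for all $t$ (or, if $\dom_{d+p-1}(F_i)=\varepsilon$, with $F_j,j$ replaced by $F_i,i$), so by Lemma~\ref{lm:prev-occ} all these leftmost occurrences are nested prefixes of a single occurrence of $F_j$ (resp.\ of the block $F_iF_{i+1}\cdots F_N$). The base case $p=2$ is then immediate: a $2$-group is a single tandem domain, and the occurrence associated with it qua $2$-group is literally the occurrence of $u_0$ associated with it qua tandem domain, so the right-hand side is a one-term concatenation.

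For the inductive step ($p\ge 3$) I would delete the leftmost domain $\dom_{d+p-1}(F_i)$; what remains, $\dom_{d+p-2}(F_{i+1}),\ldots,\dom_d(F_{i+p-1})$, is a $(p-1)$-group (its chain of equal extended domains is a sub-chain of the one above) whose tandem domains are exactly those of the $p$-group with indices $t=1,\ldots,p-2$. Writing $F_{i+1}=F_{i+p-1}\cdots F_N\,x'$ and $F_i=F_{i+p-1}\cdots F_N\,x$, the associated substring of the $(p-1)$-group is (by definition, and by the induction hypothesis) the occurrence of $x'F_{i+2}\cdots F_N=u_{p-2}u_{p-3}\cdots u_1$ sitting as the suffix of the leftmost occurrence of $F_{i+1}\cdots F_N$, while the associated substring of the $p$-group is the occurrence of $xF_{i+1}\cdots F_N$ sitting as the suffix of the leftmost occurrence of $F_i\cdots F_N$. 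Substituting $F_{i+1}=F_{i+p-1}\cdots F_N\,x'$ into $F_i=F_{i+1}(F_{i+2}\cdots F_N\,x_0)$ and cancelling the common prefix $F_{i+p-1}\cdots F_N$ gives $x=x'F_{i+2}\cdots F_N\,x_0$, hence
\[
xF_{i+1}\cdots F_N \;=\; \bigl(x'F_{i+2}\cdots F_N\bigr)\bigl(x_0F_{i+1}\cdots F_N\bigr) \;=\; \bigl(u_{p-2}u_{p-3}\cdots u_1\bigr)u_0,
\]
where $u_0=x_0F_{i+1}\cdots F_N$ is the substring associated with the $t=0$ tandem domain. Using the common host of the previous paragraph, this string identity promotes to the occurrence-level statement that the $p$-group's associated occurrence equals the $(p-1)$-group's associated occurrence immediately followed by the $t=0$ tandem's associated occurrence, i.e.\ the reverse-order concatenation $u_{p-2}u_{p-3}\cdots u_1u_0$ of the substrings associated with all $p-1$ tandem domains of the group (recall that these tandem domains occur in $s$ in the order $t=0,1,\ldots,p-2$), which closes the induction.

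I expect the only real obstacle to be the occurrence-level bookkeeping: all the word equations are routine, but to get the lemma one must know that the associated occurrences of the successive tandem domains (and of the $(p-1)$-subgroup) actually abut inside $s$, rather than merely agreeing as abstract strings. This is precisely what the defining equality $\extdom_{d+p-1}(F_i)=\cdots=\extdom_d(F_{i+p-1})$ of a $p$-group, together with Lemma~\ref{prop}, provides: it pins every leftmost occurrence appearing in the argument to the same starting point, namely the left end of the run $F_j$ heading $\dom_{d+p-1}(F_i)$ (or the left end of $F_i$ itself when that domain is empty), after which equality of lengths finishes everything. Handling the empty-domain case costs only the parenthetical remark above, since the trivial occurrence $F_iF_{i+1}\cdots F_N$ then plays the role of $F_j$ verbatim.
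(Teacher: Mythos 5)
Your proof is correct and is essentially the formalization the paper has in mind: the paper gives no explicit proof of this lemma, remarking only that one can derive it from the proof of Lemma~\ref{lm:p-groups} (which handles $p=3$ explicitly and then says to consider subgroups of size three), and your induction on $p$ — strip the leftmost domain to obtain a $(p-1)$-group, then splice in the $t=0$ tandem — is the natural way to make that sketch rigorous, using the same word equation $F_i=F_{i+1}\cdots F_N\,x_0$ and the same anchoring of all leftmost occurrences at the start of $F_j$ (or of $F_i$ when $\dom_{d+p-1}(F_i)=\varepsilon$). The one substantive point you correctly identified and addressed — that the associated occurrences must abut in $s$, not merely agree as abstract words — is exactly the content that the paper's appeal to "left-to-right order" carries implicitly.
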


Our consideration of $p$-groups culminates in the next two results.

\begin{corollary}
\label{cor:phrase-in-p-group-substr}
The substring associated with a $p$-group ($p\geq 2$) contains at least
$p-1$ different LZ77 phrase boundaries.
\end{corollary}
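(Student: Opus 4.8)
The plan is to derive the corollary directly from the structure already established for $p$-groups, with no new combinatorial work. A $p$-group $\dom_{d+p-1}(F_i),\dom_{d+p-2}(F_{i+1}),\ldots,\dom_{d}(F_{i+p-1})$ contains exactly $p-1$ tandem domains, namely the consecutive overlapping pairs $\dom_{d+p-1-t}(F_{i+t}),\dom_{d+p-2-t}(F_{i+t+1})$ for $t=0,\ldots,p-2$. By Lemma~\ref{lm:concatenation}, the substring $\alpha$ associated with the whole $p$-group is the concatenation, in reverse order, of the $p-1$ substrings $\alpha_0,\ldots,\alpha_{p-2}$ associated with these tandem domains; in particular each $\alpha_t$ is a contiguous factor of $\alpha$ and the $\alpha_t$ together cover $\alpha$ with no gaps.

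Next I would invoke two facts already proved in the excerpt. First, the lemma stating that each substring associated with a tandem domain contains an LZ77 phrase boundary gives that every $\alpha_t$ contains at least one phrase boundary. Second, Lemma~\ref{lm:p-groups} guarantees that $\alpha_0,\ldots,\alpha_{p-2}$ pairwise do not overlap. Since non-overlapping substrings of $s$ contain pairwise different phrase boundaries (as noted just before Lemma~\ref{lm:phrase-in-domain-substr}), the $p-1$ boundaries witnessed by $\alpha_0,\ldots,\alpha_{p-2}$ are all distinct. Because every $\alpha_t$ is a factor of $\alpha$, these $p-1$ distinct boundaries all lie inside $\alpha$, which is exactly the claim; note $p-1\ge 1$ since $p\ge 2$.

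I do not expect any genuine obstacle here: the corollary is a bookkeeping consequence of Lemma~\ref{lm:concatenation}, Lemma~\ref{lm:p-groups}, and the per-tandem boundary lemma. The only point that needs a moment's care is confirming that the $p-1$ phrase boundaries are actually located \emph{within} $\alpha$ rather than merely somewhere nearby in $s$ — but this is precisely what the reverse-concatenation description of Lemma~\ref{lm:concatenation} delivers, since it realizes each $\alpha_t$ as an explicit contiguous piece of $\alpha$ and not just as an abstract associated substring of $s$.
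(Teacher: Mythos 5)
Your argument is correct and matches the paper's intent exactly: the corollary is stated without a separate proof precisely because it follows, as you argue, from the per-tandem boundary lemma together with Lemma~\ref{lm:p-groups} and Lemma~\ref{lm:concatenation}. No gaps; the observation that Lemma~\ref{lm:concatenation} places each tandem-associated substring contiguously inside the group-associated substring is the right point to make explicit.
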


We say that a $p$-group
$\dom_{d+p-1}(F_i)$, $\ldots$, $\dom_{d}(F_{i+p-1})$
is \emph{disjoint} from a $p'$-group
$\dom_{d'+p'-1}(F_k)$, $\ldots$, $\dom_{d'}(F_{k+p'-1})$
if $i+p-1<k$ or $k+p'-1<i$. By combining
Lemma~\ref{lm:disjoint-tandems} and Lemma~\ref{lm:concatenation} we
obtain the following fact.

\begin{lemma}
\label{lm:no-overlapping}
Substrings associated with disjoint groups (each of size $\geq 2$)
do not overlap.
\end{lemma}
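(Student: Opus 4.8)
\subsection*{Proof proposal for Lemma~\ref{lm:no-overlapping}}

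The plan is to obtain the lemma directly from Lemma~\ref{lm:concatenation} and Lemma~\ref{lm:disjoint-tandems}; the only genuine work is a short index computation. Fix a $p$-group $\dom_{d+p-1}(F_i),\ldots,\dom_{d}(F_{i+p-1})$ and a $p'$-group $\dom_{d'+p'-1}(F_k),\ldots,\dom_{d'}(F_{k+p'-1})$ that are disjoint; without loss of generality assume $i+p-1<k$ (otherwise swap the two groups). Since both groups have size at least $2$, the first group contains exactly the $p-1$ tandem domains whose index pairs are $\{a,a+1\}$ with $i\le a\le i+p-2$, and the second contains the $p'-1$ tandem domains whose index pairs are $\{b,b+1\}$ with $k\le b\le k+p'-2$.

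The first step is to check that every tandem domain of the first group is disjoint from every tandem domain of the second. Indeed, for any such $a$ and $b$ we have $a+1\le i+p-1<k\le b$, i.e.\ $a+1<b$, which is precisely the disjointness condition for tandem domains. Hence Lemma~\ref{lm:disjoint-tandems} applies: the substring of $s$ associated with any tandem domain of the first group does not overlap the substring associated with any tandem domain of the second.

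The second step is to lift this from tandem domains to the groups themselves via Lemma~\ref{lm:concatenation}. Let $\alpha$ (resp.\ $\alpha'$) be the substring of $s$ associated with the first (resp.\ second) group. By Lemma~\ref{lm:concatenation}, the associated occurrence of $\alpha$ in $s$ is tiled, in reverse order and without gaps, by the associated occurrences of the tandem domains of the first group, so the set of positions of $s$ covered by $\alpha$ is exactly the union of the position sets covered by those tandem-associated substrings; the same holds for $\alpha'$ and the second group. If $\alpha$ and $\alpha'$ overlapped, some position of $s$ would lie simultaneously in a tandem-associated substring of the first group and in a tandem-associated substring of the second, contradicting the first step. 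Therefore $\alpha$ and $\alpha'$ do not overlap.

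There is no real obstacle here; the only point needing care is the index bookkeeping — confirming which index pairs the tandem domains of a group carry, and checking that disjointness of the groups forces $a+1<b$ in the extreme cases $a=i+p-2$, $b=k$. It is also worth noting in passing that every tandem-associated substring is non-empty (as $d\ge1$ makes $F_{i+1}\cdots F_{i+d}$ non-empty even when the corresponding leftmost domain is empty), so the ``union of position sets'' description of $\alpha$ and $\alpha'$ has no degenerate summand.
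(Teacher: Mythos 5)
Your proposal is correct and is exactly the argument the paper intends: the paper gives no explicit proof, stating only that the lemma follows ``by combining Lemma~\ref{lm:disjoint-tandems} and Lemma~\ref{lm:concatenation},'' which is precisely what you do (checking that any tandem domain of the first group is disjoint from any tandem domain of the second, then using Lemma~\ref{lm:concatenation} to tile each group-associated substring by its tandem-associated substrings). The index bookkeeping you perform, $a+1\le i+p-1<k\le b$, is the right check and is correctly carried out.
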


\subsection{Subdomains}

The concept of $p$-group does not easily extend
to $p=1$. If we simply define the 1-group as a single domain and
extend the notion of groups to include 1-groups then
Lemma~\ref{lm:no-overlapping} no longer holds. Instead, we introduce
a weaker lemma (Lemma~\ref{lm:domain-vs-tandem}) that also includes single domains.

\begin{definition}
\label{def:subdomains}
We say that a domain $\dom_e(F_k)$ is a \emph{subdomain} of a domain
$\dom_d(F_i)=F_j \cdots F_{i-1}$, $j \leq i$ if $k=i$ and $e=d$ (i.e., the
domain is its own subdomain), or $j\leq k < i$ and $\extdom_e(F_k)$ is
a substring of $\extdom_d(F_i)$
(or equivalently, if $k+e \leq i+d$).
In other words, $F_k$ has to be
one of the Lyndon runs among $F_j$, $\ldots$, $F_{i-1}$ and the extended domain of $F_k$
cannot extend (to the right) beyond the extended domain of $F_i$.
\end{definition}

\begin{lemma}
\label{lm:domain-vs-tandem}
Consider a tandem domain $\dom_{e+1}(F_k)$, $\dom_{e}(F_{k+1})$ such that
$\dom_{e+1}(F_k)$ and $\dom_{e}(F_{k+1})$ are
subdomains of $\dom_{d}(F_i)$.
Then, the substring associated with the tandem domain $\dom_{e+1}(F_k)$,
$\dom_{e}(F_{k+1})$ does not overlap the substring associated with
$\dom_{d}(F_i)$.
\end{lemma}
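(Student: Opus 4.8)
The plan is to locate inside $s$ the substring $\beta$ associated with $\dom_d(F_i)$ and the substring $\gamma$ associated with the tandem domain $\dom_{e+1}(F_k)$, $\dom_e(F_{k+1})$, and to verify that $\beta$ ends no later than the position immediately preceding the start of $\gamma$. If $\dom_d(F_i)=\varepsilon$ the statement is vacuous: the subdomain conditions then force $k=i$ from the first member, and this in turn makes the second member $\dom_e(F_{k+1})$ fail to be a subdomain of $\dom_d(F_i)$ at all. So assume $\dom_d(F_i)=F_j\cdots F_{i-1}$ with $j<i$; write $q$ for the starting position of $F_j$ in $s$ and set $\alpha=F_i\cdots F_{i+d-1}$. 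By Lemma~\ref{lm:prev-occ}, $\beta$ is the leftmost occurrence of $\alpha$, it is a prefix of $f_j$, and hence it occupies the positions $[\,q,\ q+|\alpha|-1\,]$ with $|\alpha|\le|f_j|\le|F_j|$.

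To pin down $\gamma$, note that the defining equation $\dom_{e+1}(F_k)\cdot F_k=\dom_e(F_{k+1})$ forces $\dom_e(F_{k+1})\ne\varepsilon$, so by Lemma~\ref{lm:prev-occ} we may write $F_k=F_{k+1}\cdots F_{k+e}\,x$; the leftmost occurrence of $F_k\cdots F_{k+e}$ is then $F_{k+1}\cdots F_{k+e}\cdot x\cdot F_{k+1}\cdots F_{k+e}$, and $\gamma$ is its suffix $xF_{k+1}\cdots F_{k+e}$, which begins exactly $|F_{k+1}\cdots F_{k+e}|$ positions after the start of that leftmost occurrence. A short argument rules out $k=i$ (as above it would prevent $\dom_e(F_{k+1})$ from being a subdomain of $\dom_d(F_i)$), so $\dom_{e+1}(F_k)$ is a subdomain of $\dom_d(F_i)$ via the interior clause and $j\le k<i$. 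When $\dom_{e+1}(F_k)\ne\varepsilon$, Lemma~\ref{prop} places it inside $\dom_d(F_i)$, so $\dom_{e+1}(F_k)=F_{j'}\cdots F_{k-1}$ with $j\le j'<k$ and the leftmost occurrence of $F_k\cdots F_{k+e}$ starts at the beginning of $F_{j'}$; when $\dom_{e+1}(F_k)=\varepsilon$, that occurrence is the trivial one and starts at the beginning of $F_k$. In all cases it starts at position $\ge q$, and either exactly at $q$ (precisely when $j'=j$, resp.\ $k=j$), or at the start of a run of index $\ge j+1$, hence at position $\ge q+|F_j|$.

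Now I would split on these two alternatives. If the leftmost occurrence of $F_k\cdots F_{k+e}$ starts at position $\ge q+|F_j|$, then $\gamma$ starts at position $\ge q+|F_j|>q+|\alpha|-1$, so $\gamma$ lies entirely to the right of $\beta$ and we are done. Otherwise it starts at $q$, and in each of the two responsible sub-cases $\dom_e(F_{k+1})=\dom_{e+1}(F_k)\cdot F_k$ has $F_j$ as its leftmost run, so this nonempty subdomain of $\dom_d(F_i)$ starts at position $q$. Hence either $\dom_e(F_{k+1})=\dom_d(F_i)$, which forces $k+1=i$ and $e=d$, so $F_{k+1}\cdots F_{k+e}=\alpha$ and $\gamma$ starts at exactly $q+|\alpha|$, one position past the end of $\beta$; or $\dom_e(F_{k+1})$ is a proper interior subdomain, which combined with $k\ge j$ gives $j<k+1<i$, so by Lemma~\ref{lm:prev-occ}(2) $\alpha$ is a prefix of $f_{k+1}$, whence $|\alpha|\le|f_{k+1}|\le|F_{k+1}|\le|F_{k+1}\cdots F_{k+e}|$ and $\gamma$ starts at $q+|F_{k+1}\cdots F_{k+e}|\ge q+|\alpha|$, again past the end of $\beta$. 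In every case $\beta$ and $\gamma$ do not overlap.

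The one genuinely non-trivial step is the final bound: from the hypothesis that both tandem members are subdomains of $\dom_d(F_i)$ one must extract the index inequality $j<k+1<i$ so that Lemma~\ref{lm:prev-occ}(2) applies and yields $|\alpha|\le|F_{k+1}|$, which is precisely what stops $\gamma$ from running into $\beta$. Everything else is routine location of occurrences via Lemma~\ref{lm:prev-occ} and comparison of positions, the only mild nuisance being to handle uniformly the degenerate configurations $\dom_{e+1}(F_k)=\varepsilon$ and $\dom_e(F_{k+1})=\dom_d(F_i)$.
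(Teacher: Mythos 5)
Your proof is correct and follows essentially the same route as the paper's: it pins the substring associated with $\dom_d(F_i)$ down as a prefix of $F_j$, then splits according to where the leftmost occurrence of $F_k\cdots F_{k+e}$ starts, using Lemma~\ref{lm:prev-occ}(2) in the critical case $j<k+1<i$ to get $|F_i\cdots F_{i+d-1}|\le|F_{k+1}|$ and handling $k+1=i$ via the forced $e=d$ adjacency. The only difference is presentational: you track explicit positions and the degenerate configurations ($\dom_{e+1}(F_k)=\varepsilon$, $\dom_d(F_i)=\varepsilon$) a bit more carefully than the paper does.
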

\begin{proof}
First, observe that in order for a tandem domain consisting of two subdomains
to exist, $\dom_{d}(F_i)$ has to be non-empty. Thus, let
$\dom_{d}(F_i)=F_j \cdots F_{i-1}$ for some $j<i$. This implies
(Lemma~\ref{lm:prev-occ}) that
the substring associated with $\dom_{d}(F_i)$ is a prefix of $F_j$.

Assume first that $\dom_{e+1}(F_k)=F_{j'}\cdots F_{k-1}$, $j<j' \leq k$.
The substring associated with the tandem domain
is a substring of $\extdom_{e+1}(F_k)$ thus
it trivially does not overlap $F_j$.

Assume then that $\dom_{e+1}(F_k)=F_j\cdots F_{k-1}$.
If $k+1<i$ then by Lemma~\ref{lm:prev-occ}, $F_{i} \cdots F_{i+d-1}$
is a prefix of $F_{k+1}$. By Definition~\ref{def:tandem-associated-substring} the
leftmost occurrence of $F_{k} \cdots F_{k+e}$ in $s$ can be written as
$F_{k+1} \cdots F_{k+e} x F_{k+1} \cdots F_{k+e}$.
Thus clearly the leftmost occurrence of
$F_{i} \cdots F_{i+d-1}$ (associated with $\dom_k(F_i)$)
occurs in a prefix $F_{k+1}$ not overlapped by $x F_{k+1} \cdots F_{k+e}$
(which is a substring associated with the tandem domain).

The remaining case is when $k+1=i$. Then by Definition~\ref{def:subdomains} we must
have $e=d$ and again the claim holds easily from
Definition~\ref{def:tandem-associated-substring}.
\end{proof}

For any domain $\dom_{d}(F_i)=F_j \cdots F_{i-1}, j < i$ we define
the set of \emph{canonical subdomains} as follows.
Consider the following procedure. Initialize the set of canonical subdomains
to contain $\dom_{d}(F_i)$. Then initialize $\delta=d$ and start
scanning the Lyndon runs $F_{j}$, $\ldots$, $F_{i-1}$ right-to-left. When
scanning $F_t$ we check if $\dom_{\delta+1}(F_t)=F_{j} \cdots F_{t-1}$.
\begin{itemize}
\item If yes,
we include $\dom_{\delta+1}(F_t)$ into the set, increment $\delta$ and continue
scanning from $F_{t-1}$.
\item Otherwise, i.e., if $\dom_{\delta+1}(F_t)=F_{j'} \cdots F_{t-1}$ for
some $j'>j$, we include the domain $\dom_{\delta+1}(F_t)$ into the set. Then we
set $\delta=0$ and continue scanning from $F_{j'-1}$. All domains that were
included into the set of canonical subdomains in this case are called \emph{loose}
subdomains.
\end{itemize}

See Fig.~\ref{fig:example-of-canonical-set} for an example.
The above procedure simply greedily constructs groups of domains,
and whenever the candidate for the next domain in the current group does not
have a domain that starts with $F_j$, we terminate the current group, add the
loose subdomain into the set and continue building groups starting with the next
Lyndon run outside the (just included) loose subdomain.

Note that the current group can be terminated when containing just one domain,
so it is not a $p$-group in this case. Hence we call
the resulting sequences of non-loose domains \emph{clusters}, i.e.,
a cluster is either a single domain, or a $p$-group ($p \geq 2$). 
Note also that during the construction we may encounter more than one loose subdomain in
a row, so clusters and loose subdomains do not necessarily alternate, but
no two clusters occur consecutively.

Finally, observe that the sequence of clusters and loose subdomains always
ends with a cluster (possibly of size one) containing $\dom_{d'}(F_j)$ for
some $d'$ ($d'=3$ for the example in Fig.~\ref{fig:example-of-canonical-set}), since $\dom_{d'}(F_j)=\varepsilon$ for all $d'$.

\subsection{Proof of the Main Theorem}

We are now ready to prove the key Lemma of the proof. Recall that the size
of $\dom_d(F_i)=F_j \cdots F_{i-1}$, $j \leq i$ is defined as $i-j$.

\begin{figure}[t!]
  \centering
  \begin{minipage}{\textwidth}
    \centering
    \begin{tikzpicture}[scale=0.30,font=\footnotesize]
      \draw (0, 0) rectangle (2, 1.3);
      \draw (2, 0) rectangle (4, 1.3);
      \draw (4, 0) rectangle (6, 1.3);
      \draw (6, 0) rectangle (8, 1.3);
      \draw (8, 0) rectangle (10, 1.3);
      \draw (10, 0) rectangle (12, 1.3);
      \draw (12, 0) rectangle (14, 1.3);
      \draw (14, 0) rectangle (16, 1.3);
      \draw (16, 0) rectangle (18, 1.3);
      \draw (18, 0) rectangle (20, 1.3);
      \draw (20, 0) rectangle (22, 1.3);
      \draw (22.2, 0) rectangle (24.2, 1.3);
      \draw (24.2, 0) rectangle (26.2, 1.3);
      \draw (26.2, 0) rectangle (28.2, 1.3);
      \draw (28.2, 0) rectangle (30.2, 1.3);
      \draw (30.2, 0) rectangle (32.2, 1.3);
      \draw[-, rounded corners=0.01cm] (22, 0) -- (22,3) -- (22.2,3) -- (22.2, 0);
      \draw[-, rounded corners=0.3cm] (10, 0) -- (10, 3) -- (6, 3) -- (6, 0);
      \draw[-, rounded corners=0.15cm] (16, 0) -- (16, 3) -- (14, 3) -- (14, 0);
      \draw[-, rounded corners=0.15cm] (2, 0) -- (2, 3) -- (0, 3) -- (0, 0);
      \draw[-, rounded corners=0.3cm] (4, 0) -- (4, 5) -- (0, 5) -- (0, 0);
      \draw[-, rounded corners=0.3cm] (18, 0) -- (18, 9) -- (0, 9) -- (0, 0);
      \draw[-, rounded corners=0.3cm] (20, 0) -- (20, 11) -- (0, 11) -- (0, 0);
      \draw[-, rounded corners=0.3cm] (24.2, 0) -- (24.2, 13) -- (0, 13) -- (0, 0);
      \draw[-, rounded corners=0.3cm] (26.2, 0) -- (26.2, 15) -- (0, 15) -- (0, 0);
      \draw[-, rounded corners=0.3cm] (28.2, 0) -- (28.2, 17) -- (0, 17) -- (0, 0);
      \draw[-, rounded corners=0.3cm] (12, 0) -- (12, 7) -- (0, 7) -- (0, 0);
      \draw[-, rounded corners=0.01cm] (0, 0) -- (-0.4,2) -- (-0.6,2) -- (-0.2, 0);
      \draw (-0.7, 2) node[above]{3};
      \draw (1, 3) node[above]{2};
      \draw (2, 5) node[above]{1};
      \draw (8, 3) node[above]{2};
      \draw (15, 3) node[above]{3};
      \draw (9, 9) node[above]{2};
      \draw (10, 11) node[above]{1};
      \draw (22.1, 3) node[above]{5};
      \draw (12.1, 13) node[above]{4};
      \draw (13.1, 15) node[above]{3};
      \draw (14.1, 17) node[above]{2};
      \draw (6, 7) node[above]{1};
      \draw (1, -0.2) node[above]{$F_1$};
      \draw (3, -0.2) node[above]{$F_2$};
      \draw (5, -0.2) node[above]{$F_3$};
      \draw (7, -0.2) node[above]{$F_4$};
      \draw (9, -0.2) node[above]{$F_5$};
     \draw (11, -0.2) node[above]{$F_6$};
     \draw (13, -0.2) node[above]{$F_7$};
     \draw (15, -0.2) node[above]{$F_8$};
     \draw (17, -0.2) node[above]{$F_9$};
     \draw (19, -0.2) node[above]{$F_{10}$};
     \draw (21, -0.2) node[above]{$F_{11}$};
     \draw (23.1, -0.2) node[above]{$F_{12}$};
     \draw (25.1, -0.2) node[above]{$F_{13}$};
     \draw (27.1, -0.2) node[above]{$F_{14}$};
     \draw (29.1, -0.2) node[above]{$F_{15}$};
     \draw (31.1, -0.2) node[above]{$F_{16}$};
     \draw (6, -1.3) rectangle (14, 0);
     \draw (14, -1.3) rectangle (22.2, 0);
     \draw (22.2, -1.3) rectangle (32.2, 0);
     \draw (10, -1.6) node[above]{$\extdom_2(F_6)$};
     \draw (18, -1.6) node[above]{$\extdom_3(F_9)$};
     \draw (27.1, -1.6) node[above]{$\extdom_5(F_{12})$};
    \end{tikzpicture}
  \end{minipage}
  \caption{An example showing the set of canonical subdomains of $\dom_2(F_{15})$.
    Using notation from Lemma~\ref{lm:final-lemma}, the set has $p=4$ clusters of size
    (left-to-right): $\ell_1=3, \ell_2=1, \ell_3=2, \ell_4=3$, and $t=3$
    loose subdomains: $\dom_2(F_6)=F_4 F_5$, $\dom_3(F_9)=F_8$,
    $\dom_5(F_{12})=\varepsilon$ of size $k_1=2$, $k_2=1$, $k_3=0$.
    Note how the extended domains of loose subdomains do not overlap each other.
    Furthermore, note that $\extdom_2(F_{15})=F_1\cdots F_{16}$ can be
    factorized as $F_1 \cdots F_{\ell_1}$ concatenated with the extended domains. By
    Corollary~\ref{cor:phrase-in-p-group-substr} and Lemmas~\ref{lm:no-overlapping}
    and~\ref{lm:domain-vs-tandem},
  	$F_1 \cdots F_{\ell_1}$ contains $1+\sum_{h=1}^{p}(\ell_h-1)=6$ LZ77 phrase
    boundaries, while the extended domains $\extdom_2(F_6)$, $\extdom_3(F_9)$,
    $\extdom_5(F_{12})$ contain $\sum_{h=1}^{t}(\lceil k_h/2 \rceil +1)=5$ LZ77
    phrase boundaries by Lemma~\ref{lm:final-lemma}.}
  \label{fig:example-of-canonical-set}
\end{figure}
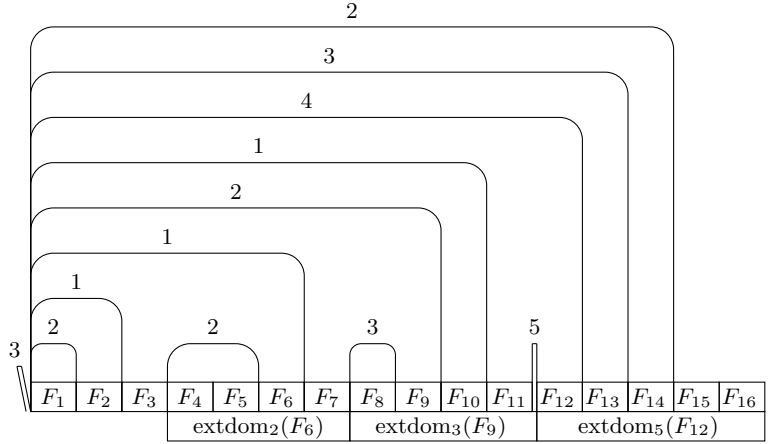

\begin{lemma}
\label{lm:final-lemma}
Let $\dom_d(F_i)$ be a domain of size $k \geq 0$. Then $\extdom_d(F_i)$ contains
at least $\lceil k/2 \rceil + 1$ different LZ77 phrase boundaries.
\end{lemma}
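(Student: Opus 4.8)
The plan is to prove the statement by strong induction on the domain size $k$. The base case $k=0$ is immediate: then $\dom_d(F_i)=\varepsilon$ means, by definition, that $F_i\cdots F_{i+d-1}$ has no occurrence to the left of the trivial one, so the substring associated with $\dom_d(F_i)$ is all of $\extdom_d(F_i)=F_i\cdots F_{i+d-1}$, which by Lemma~\ref{lm:phrase-in-domain-substr} contains at least $1=\lceil 0/2\rceil+1$ phrase boundary.

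For the inductive step ($k\ge1$) I would start from the canonical‑subdomain decomposition of $\dom_d(F_i)=F_j\cdots F_{i-1}$, which produces, in left‑to‑right order, clusters $C_1,\dots,C_p$ of sizes $\ell_1,\dots,\ell_p\ge1$ and loose subdomains $L_1,\dots,L_t$ of sizes $k_1,\dots,k_t$. Two facts come directly from the construction: every loose subdomain has its first Lyndon run strictly to the right of $F_j$ and its last run strictly to the left of $F_{i-1}$, so $k_h\le k-2<k$; and since no two clusters are adjacent while $C_1$ is the leftmost element of the sequence, $t\ge p-1$. The geometric heart is that the construction tiles the extended domain,
\[
\extdom_d(F_i)=\extdom(C_1)\cdot\extdom(L_1)\cdots\extdom(L_t),
\]
and counting Lyndon runs in this identity gives $k=\sum_{h=1}^{p}\ell_h+\sum_{h=1}^{t}k_h+t-1$. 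Since each $\extdom(L_h)$ is the extended domain of a domain of size $k_h<k$, the induction hypothesis puts at least $\lceil k_h/2\rceil+1$ phrase boundaries inside it, and, the tiling being non‑overlapping, these counts add up.

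The crux — and the step I expect to be the main obstacle — is to show that $\extdom(C_1)$ alone contains at least $1+\sum_{h=1}^{p}(\ell_h-1)$ phrase boundaries. Here one uses that the substring associated with each cluster $C_h$, together with the substring associated with the single domain $\dom_d(F_i)$, all lie inside $\extdom(C_1)$; in fact all of them except $C_1$'s lie inside the first Lyndon run $F_j$, since by statement~1 of Lemma~\ref{lm:prev-occ} the leftmost occurrence of the relevant $\alpha$ is a prefix of $F_j$, while $\dom_d(F_i)$'s associated substring is the leftmost occurrence of $F_i\cdots F_{i+d-1}$, also a prefix of $F_j$. Each cluster $C_h$ with $\ell_h\ge2$ is a $p$‑group whose associated substring contains $\ge\ell_h-1$ boundaries (Corollary~\ref{cor:phrase-in-p-group-substr}), and $\dom_d(F_i)$'s associated substring contributes the extra $+1$ (Lemma~\ref{lm:phrase-in-domain-substr}). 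To add these counts one must verify pairwise non‑overlap: Lemma~\ref{lm:no-overlapping} handles distinct clusters (which are pairwise disjoint as groups by $t\ge p-1$ and the separation rule), and Lemma~\ref{lm:domain-vs-tandem} together with Lemma~\ref{lm:concatenation} separates $\dom_d(F_i)$'s substring from the tandem pieces inside the cluster that contains it. The delicate point is confirming that all these associated substrings, which sit as prefixes or infixes of $F_j$, truly do not collide; I would do this by pinning down, via statement~2 of Lemma~\ref{lm:prev-occ} and Lemma~\ref{lm:between}, exactly which prefix‑length of $F_j$ each one occupies — the $\alpha$ of a domain starting at $F_j$ being a prefix of $F_j$, with longer domains yielding longer such prefixes — and checking the lengths interleave correctly.

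Finally I would close the induction by arithmetic. Combining the bounds, $\extdom_d(F_i)$ has at least $1+\sum_h(\ell_h-1)+\sum_h(\lceil k_h/2\rceil+1)$ boundaries; writing $A=\sum_h\ell_h$, $B=\sum_h k_h$ and substituting $k=A+B+t-1$, the desired inequality $\ge\lceil k/2\rceil+1$ reduces to $A+t+t_{\mathrm{odd}}\ge 2p-1$ (and to $\ge 2p$ in the case $k$ odd), where $t_{\mathrm{odd}}$ is the number of odd $k_h$. Both follow from $A\ge p$, $t\ge p-1$, and the remark that $k$ odd forces $B$ odd, hence $t_{\mathrm{odd}}\ge1$. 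This completes the induction and proves the lemma.
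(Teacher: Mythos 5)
Your proposal is correct and is essentially the paper's own argument: the same induction on $k$ with the same base case, the same canonical-subdomain decomposition into clusters and loose subdomains, the same tiling of $\extdom_d(F_i)$ into $F_j\cdots F_{j+\ell_1-1}$ followed by the loose extended domains, the cluster boundaries confined to that leftmost block and separated using Corollary~\ref{cor:phrase-in-p-group-substr} and Lemmas~\ref{lm:no-overlapping}, \ref{lm:concatenation} and~\ref{lm:domain-vs-tandem} (the paper applies Lemma~\ref{lm:domain-vs-tandem} to the tandems of \emph{every} cluster, since all their members are subdomains of $\dom_d(F_i)$, which spares you the prefix-length interleaving analysis you anticipate), and the inductive hypothesis inside the loose extended domains, with your identity $k=\sum_h\ell_h+\sum_h k_h+t-1$ being an equivalent rewriting of the paper's Eq.~\ref{k_h} and Eq.~\ref{S}. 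One small repair to the final arithmetic: the remark that ``$k$ odd forces $B$ odd'' is false in general (e.g.\ $A=p+1$, $t=p-1$, $B$ even yields odd $k$), but it is needed only in the tight case $A=p$, $t=p-1$, $t_{\mathrm{odd}}=0$, where $k=2p-2+B$ would then be even, so your inequality $A+t+t_{\mathrm{odd}}\ge 2p-1+[k \text{ odd}]$ still holds; alternatively, as in the paper, bound the count below by $1+k/2$ and round up because it is an integer.
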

\begin{proof}
Let $\dom_d(F_i)=F_j \cdots F_{i-1}$, $j \leq i$ and $k=i-j$. The proof is by induction
on $k$. For $k=0$, $\extdom_d(F_i)$ is the substring of $s$ associated with
$\dom_d(F_i)$ (see Definition~\ref{def:domain-associated-substring}) and thus
by Lemma~\ref{lm:phrase-in-domain-substr}, $\extdom_d(F_i)$ contains at least
one LZ77 phrase boundary.

Let $k>0$ and assume now that the claim holds for all smaller $k$. Consider
the set ${\cal C}_{i,d}$ of canonical subdomains of $\dom_d(F_i)$. If ${\cal C}_{i,d}$  contains no loose subdomain, it consists of a single cluster which is a $(k+1)$-group. By Corollary~\ref{cor:phrase-in-p-group-substr}, the substring associated with this group contains $k$ phrase boundaries; by Lemma~\ref{lm:domain-vs-tandem}, one more boundary is provided by the domain $\dom_d(F_i)$ itself. We have $1+k\ge 1+\lceil k/2 \rceil$, which concludes the proof of this case.

For the rest of the proof assume that ${\cal C}_{i,d}$ contains $t\ge 1$ loose subdomains, denoted, left to right, by $\dom_{d_1}(F_{i_1}), \ldots, \dom_{d_t}(F_{i_t})$. Note that $d_t>d$. Let $k_h$ be the size of $\dom_{d_h}(F_{i_h})$, $h=1,\ldots,t$. Further, let $\ell\ge 1$ be the size of the leftmost cluster (the one that contains some domain of $F_j$). By the construction of the canonical set we have
\begin{equation}
\label{extdom}
  \extdom_d(F_i)=F_j \cdots F_{j+\ell-1} \extdom_{d_1}(F_{i_1})
    \extdom_{d_2}(F_{i_2})\cdots \extdom_{d_t}(F_{i_t}).
\end{equation}
Both clusters and loose subdomains contribute some number of LZ77 phrase boundaries into their total. The boundaries contributed by clusters are all different by Lemma~\ref{lm:no-overlapping}; let $S$ be their number. These boundaries are also different from the boundary inside the substring associated with $\dom_d(F_i)$ by Lemma~\ref{lm:domain-vs-tandem}. Furthermore, it is easy to see from the proof of Lemma~\ref{lm:domain-vs-tandem} that all these phrase boundaries are located inside $F_j \cdots F_{j+\ell-1}$. The number of phrase boundaries inside the extended domains of loose subdomains can be estimated by the inductive assumption
(by Eq.~\ref{extdom}, these external domains do not overlap each other or $F_j \cdots F_{j+\ell-1}$). So we obtain that $\extdom_d(F_i)$ contains at least
\begin{equation}
\label{equation}
1 + \sum_{h=1}^{t}\left(\left\lceil \frac{k_h}{2} \right\rceil + 1\right) + S
\end{equation}
different LZ77 phrase boundaries. Let us evaluate $\sum_{h=1}^{t} k_h$. By the construction, a loose $d_h$-subdomain is followed by exactly $d_h$ Lyndon runs which are outside loose subdomains; then another loose subdomain follows (cf. Fig.~\ref{fig:example-of-canonical-set}). The only exception is the rightmost loose subdomain, which is followed by $d_t-d$ Lyndon runs outside loose subdomains (note that we only count Lyndon runs inside $\dom_d(F_i)$). Then
\begin{equation}
\label{k_h}
\sum_{h=1}^{t} k_h= k - \ell - \sum_{h=1}^{t} d_h +d.
\end{equation}
Next we evaluate $S$. By Corollary~\ref{cor:phrase-in-p-group-substr}, a cluster of size $r$ contributes $r-1$ phrase boundaries. Then the leftmost (resp., rightmost) cluster contributes $\ell-1$ (resp., $d_t-d-1$) boundaries. Each of the remaining clusters is preceded by a loose $d_h$-subdomain, where $d_h>1$, and contributes $d_h-2$ boundaries. Using Knuth's notation [predicate] for the numerical value (0 or 1) of the predicate in brackets, we can write
\begin{equation}
\label{S}
S= \ell - 1 + \sum_{h=1}^{t} d_h - t - d - \sum_{h=1}^{t-1} [d_h>1].
\end{equation}
Finally, we estimate the number in Eq.~\ref{equation} using Eq. \ref{k_h} and Eq. \ref{S}:
\begin{multline*}
1 + \sum_{h=1}^{t}\left(\left\lceil \frac {k_h}2 \right\rceil + 1\right) + S \ge 1 + t + \frac {k - \ell - \sum_{h=1}^{t} d_h +d}2 + \ell - 1 + \sum_{h=1}^{t} d_h - t - d - \sum_{h=1}^{t-1} [d_h>1]\\
= \frac k2 + \frac {\ell}2 + \sum_{h=1}^{t} \frac {d_h}2 - \frac d2 - \sum_{h=1}^{t-1} [d_h>1] = 
\frac {\ell + d_t - d}2 + \frac{k}{2} + \sum_{h=1}^{t-1} \left(\frac {d_h}2 - [d_h>1]\right) \ge 1 + \frac k2.
\end{multline*}
The obtained lower bound for an integer can be rounded up to $1 + \lceil k/2 \rceil$, as required.
\end{proof}

Using the above Lemma we can finally prove the main Theorem.

\begin{proof}[Proof of Theorem \ref{thm:upper-bound}]
Partition the string $s$ into extended domains as follows: take the string $s'$ such that $s=s' \cdot \extdom_1(F_m)$ and partition $s'$ recursively to get
\[
s= \extdom_1(F_{i_1})\cdots \extdom_1(F_{i_t}), \text{ where } i_t=m.
\]
By Lemma~\ref{lm:final-lemma}, each extended domain $\extdom_1(F_{i_h})$ contains at least $\lceil k_h/2 \rceil+1$ phrase boundaries, where $k_h$ is the size of the domain $\dom_1(F_{i_h})$. Clearly, $\sum_{h=1}^t k_h = m - t$; hence the total number $z$ of the boundaries satisfies
\[
z\ge \sum_{h=1}^t \left(\left\lceil \frac {k_h}2 \right\rceil + 1 \right) \ge \left\lceil \frac {m-t}2 \right\rceil + t =  
\left\lceil \frac {m+t}2 \right\rceil > \frac m2,
\]
as required.
\end{proof}

\section{Lower Bound}

The upper bound on the number of factors in the Lyndon factorization
of a string, given in of Theorem~\ref{thm:upper-bound}, is supported by the
following lower bound. Consider a string $s_k = \block_0 \cdots
\block_k a$, $k \geq 0$, where:
\begin{align*}
  \block_0 &= b, \\
  \block_1 &= ab, \\
  \block_2 &= a^2baba^2b, \\
  \cdots &\\
  \block_k &= (a^kba^1b) \cdots (a^kba^{k-1}b) a^kb.
\end{align*}
For example, $s_3 = (b)(ab)(a^2baba^2b)(a^3baba^3ba^2ba^3b)(a)$.

\begin{theorem}
Let $f_1\cdots f_{m_k}$ and $p_1\cdots p_{z_k}$ be the Lyndon factorization and the non-overlapping LZ factorization of the string $s_k$, $k \geq 2$. Then $m_k = k^2/2 + k/2 + 2$, $z_k = k^2/2 - k/2 + 4$, and thus $m_k=z_k+\Theta(\sqrt{z_k})$.
\end{theorem}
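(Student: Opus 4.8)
The plan is to compute $m_k$ and $z_k$ separately and then read off the asymptotics; throughout we use the alphabet order $a \prec b$.

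\emph{Computing $m_k$.} I would first check that $a^j b$ is a Lyndon word for every $j \geq 1$ and that $a^j b a^i b$ is a Lyndon word for every $1 \leq i < j$: in each of these strings the only maximal $a$-run of length $j$ is the prefix, so the string has no proper border, and every proper non-empty suffix is lexicographically larger by an immediate mismatch (for instance $a^j b a^i b \prec a^i b$ since the $(i{+}1)$-st symbol is $a$ on the left and $b$ on the right). I would then observe that $B_0 = b$, that for $j \geq 1$
\[
  B_j = (a^j b a^1 b)(a^j b a^2 b)\cdots(a^j b a^{j-1}b)(a^j b)
\]
is a product of $j$ Lyndon words which is \emph{strictly} lexicographically decreasing --- consecutive comparisons reduce to $a^i b \succ a^{i+1}b$, and the final comparison $a^j b a^{j-1}b \succ a^j b$ holds because $a^j b$ is a proper prefix of the left word --- and that the trailing symbol $a$ is a Lyndon word. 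It then remains to verify the junction inequalities $b \succ ab$, $ab \succ a^2bab$, $a^{j-1}b \succ a^j b a b$ for $3 \leq j \leq k$, and $a^k b \succ a$, each of which is an immediate mismatch comparison. By uniqueness of the Lyndon factorization as a non-increasing product of Lyndon words \cite{CFL58}, $\lf{s_k}$ is precisely the concatenation of these per-block factorizations, so
\[
  m_k = 1 + \sum_{j=1}^k j + 1 = 2 + \frac{k(k+1)}{2} = \frac{k^2}{2} + \frac{k}{2} + 2.
\]

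\emph{Computing $z_k$.} This is the more delicate half, since greedy LZ ignores block boundaries; I would prove by induction on $k$ an explicit description of the entire parse, the base case $k=2$ being the direct computation $b,\ a,\ ba,\ aba,\ baaba$ (so $z_2 = 5$). For the inductive step, the key structural fact is that $s_k$ is obtained from $s_{k-1}$ by replacing its trailing $a$ with $B_k a$; hence $s_k$ and $s_{k-1}$ share the prefix $B_0 \cdots B_{k-1}$, and since the greedy parse is a left-to-right deterministic function of the prefix already consumed, $\lz{s_k}$ reproduces the first $z_{k-1}-1$ phrases of $\lz{s_{k-1}}$ verbatim (these all lie strictly inside $B_0\cdots B_{k-1}$), and one checks separately that the last phrase of $\lz{s_{k-1}}$ --- which ends at the first position of $B_k$ in $s_k$ --- is not prolonged when $B_k a$ is appended. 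It then remains to analyze how the parser continues through the suffix $B_k a$. The technical core is a combinatorial lemma that, for each position $q$ inside the $B_k a$-region, identifies the longest prefix of $s_k[q..]$ having an occurrence in $s_k[1..q{-}1]$; proving it relies on the self-similar structure of the blocks, most crucially on the fact that $a^k$ does not occur anywhere in $B_0\cdots B_{k-1}$ (whose longest $a$-run is $a^{k-1}$), which forces a phrase break at each of the $k$ copies of $a^k b$ inside $B_k$, whereas the interior patterns $a^k b a^i b$ are matched against material that first appeared in $B_2, \dots, B_{k-1}$. Carrying this out shows that $B_k a$ contributes exactly $k-1$ fresh phrases, so $z_k = z_{k-1} + (k-1)$ for $k \geq 3$; combined with $z_2 = 5$ this yields
\[
  z_k = 5 + \sum_{j=3}^{k}(j-1) = \frac{k^2}{2} - \frac{k}{2} + 4.
\]

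\emph{Main obstacle and conclusion.} I expect the combinatorial lemma on longest previous occurrences to be where the real work lies: one must rule out ``spurious'' long matches that would fuse two intended phrases, which requires controlling the positions of the maximal $a$-runs and determining precisely which two-block patterns $a^i b a^{i'} b$ do or do not occur in a given prefix of $s_k$. Once $m_k$ and $z_k$ are in hand the asymptotic claim is immediate: $m_k - z_k = (\frac{k^2}{2} + \frac{k}{2} + 2) - (\frac{k^2}{2} - \frac{k}{2} + 4) = k - 2$, and since $z_k = \frac{k^2}{2} - \frac{k}{2} + 4 = \Theta(k^2)$ we have $k = \Theta(\sqrt{z_k})$, whence $m_k = z_k + (k-2) = z_k + \Theta(\sqrt{z_k})$.
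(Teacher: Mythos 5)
Your Lyndon count is correct and is essentially the paper's argument: you exhibit the block-by-block factorization into Lyndon words, check strict decrease inside each block and at the junctions, and appeal to uniqueness of the Chen--Fox--Lyndon factorization, obtaining $m_k = k(k+1)/2 + 2$ (the paper instead argues that no Lyndon factor can cross a block boundary, because $a^i$ occurs for the first time at the start of $B_i$; either route is fine).

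The LZ half, however, is a plan rather than a proof, and the one concrete structural claim you do make is incorrect. You defer everything to an unproved ``combinatorial lemma'' and then assert that ``carrying this out shows that $B_k a$ contributes exactly $k-1$ fresh phrases''; but that count is precisely what has to be established. Moreover, the mechanism you propose --- that the absence of $a^k$ in $B_0\cdots B_{k-1}$ ``forces a phrase break at each of the $k$ copies of $a^k b$ inside $B_k$'' --- is neither true nor sufficient: it forces a break only within the \emph{first} copy, since every later $a^k$-run has an earlier occurrence inside $B_k$ itself, lying to the left of the position being parsed, so ruling out long matches there requires controlling longer contexts, not just the factor $a^k$. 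In the actual parse the boundaries fall strictly inside the $a^k$-runs (the first run is split $a \mid a^{k-1}$, runs $2,\ldots,k{-}1$ are split $a^{k-1}\mid a$), and the last copy of $a^kb$ contains no boundary at all, sitting inside the final phrase $aba^{k-1}ba^kba$; note also that a break at all $k$ copies would yield $k$, not $k-1$, fresh phrases, so your two statements are inconsistent with each other. What is missing is exactly the content of the paper's induction: the explicit identity
\[
\mathit{LZ}(s_k)=\mathit{LZ}(s_{k-1})\cdot a^{k-1}baba^{k-1}\cdot aba^2ba^{k-1}\cdots aba^{k-2}ba^{k-1}\cdot aba^{k-1}ba^kba,
\]
verified by checking that the last phrase $aba^{k-2}ba^{k-1}ba$ of $\mathit{LZ}(s_{k-1})$ and each of the $k-1$ new phrases has a \emph{unique} previous occurrence (at the $B_{k-2}B_{k-1}$ boundary, respectively inside $B_{k-1}$), and that this occurrence is followed by $b$ because $B_{k-1}$ contains no factor $a^k$, so no phrase can be prolonged. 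Without this (or an equivalent position-by-position analysis of longest previous factors), the recurrence $z_k=z_{k-1}+(k-1)$ and hence $z_k=k^2/2-k/2+4$ is unsupported; the final asymptotic step $m_k-z_k=k-2=\Theta(\sqrt{z_k})$ is fine once the two counts are in hand.
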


\begin{proof}
First we count Lyndon factors.
All factors will be
different, so their number coincides with the number of Lyndon runs. By the definition of Lyndon factorization, the block $\block_i$ ($0<i\leq k$) is factorized into $i$ Lyndon factors:
\begin{equation} \label{eq:lyndon}
\block_i= a^iba^1b \cdot a^iba^2b \cdots a^iba^{i-1}b \cdot a^ib.
\end{equation}
For any suffix $u$ of $\block_0 \cdots \block_{i-1}$
and any prefix $v$ of $\block_i$, $u \succ v$ holds since $a^i$ is a prefix of $\block_i$ and this is the leftmost occurrence of $a^i$.
Thus there is no Lyndon word that begins in $\block_0 \cdots \block_{i-1}$
and ends in $\block_i$. This implies that the factorization of $s_k$ is the concatenation of the first $b$, then $k$ factorizations Eq. \ref{eq:lyndon}, and the final $a$, $k^2/2 + k/2 + 2$ factors in total.

Let $\lz{s_k}$ denote the LZ factorization of $s_k$. The size of $\lz{s_2}=b\cdot a\cdot ba\cdot aba\cdot baaba$ is 5. For $k\ge 3$, we prove by induction that
\begin{equation} \label{eq:LZ}
\lz{s_k} = \lz{s_{k-1}}\cdot a^{k-1}baba^{k-1} \cdot aba^2ba^{k-1} \cdots aba^{k-2}ba^{k-1} \cdot aba^{k-1}ba^kba.
\end{equation}

For $k=3$ we have $LZ(s_3) = LZ(s_2) \cdot aababaa \cdot abaabaaaba$ and thus the claim holds.
If $k>3$, by the inductive hypothesis the last
phrase in $\lz{s_{k-1}}$ is $p=aba^{k-2}ba^{k-1}ba$. The factor $p$ has only one previous occurrence: it occurs
at the boundary between
$B_{k-2}$ and $B_{k-1}$, followed by $b$. So, $p$ remains a phrase in $\lz{s_k}$.
Each of subsequent $k-2$ phrases of Eq. \ref{eq:LZ} also has a single previous occurrence (inside $B_{k-1}$), and this occurrence is followed by $b$ because $B_{k-1}$ has no factor $a^k$. Thus, Eq. \ref{eq:LZ} correctly represents $\lz{s_k}$. Direct computation now gives $z_k= k^2/2 - k/2 + 4$.
\end{proof}

\bibliography{lzly}

\end{document}